\newtheorem{thm}{Theorem}[section]
\newtheorem{cor}[thm]{Corollary}
\newtheorem{lem}[thm]{Lemma}
\newtheorem{prop}[thm]{Proposition}
\newtheorem{defin}[thm]{Definition}
\newtheorem{condition}[thm]{Condition}
\theoremstyle{definition}
\newtheorem{example}[thm]{Example}
\newcommand{\dom}{\textnormal{dom}}
\newcommand{\Int}{\textnormal{int}}
\newcommand{\R}{\mathbb{R}}
\newcommand{\N}{\mathbb{N}}
\newcommand{\wt}{\widetilde}
\newcommand{\wh}{\widehat}
\title[The geometric stability of Voronoi diagrams]{The geometric stability of Voronoi diagrams in normed spaces which are not uniformly convex}
\author{Daniel Reem}
\thanks{
\noindent IMPA - Instituto Nacional de Matem\'atica Pura e Aplicada, Estrada Dona Castorina 110, Jardim Bot\^anico, CEP 22460-320, Rio de Janeiro, RJ,  Brazil.\\
\noindent E-mail: dream@impa.br }
\subjclass[2010]{46N99, 68U05, 46B20, 65D18} 
\keywords{Continuity, finite face decomposition, geometric stability, non-uniformly convex normed space, unit sphere, 
perturbation, Voronoi diagram.}
\date{April 29, 2013}
\begin{document}
\maketitle
\begin{abstract}
The Voronoi diagram is a geometric object which is widely used in many areas. 
 Recently it has been shown that under mild conditions Voronoi diagrams have a certain continuity property:  
small perturbations of the sites yield small perturbations in the shapes of the corresponding Voronoi cells. 
However, this result is based on the assumption that the ambient normed space is uniformly 
convex. Unfortunately, simple counterexamples show that if uniform convexity is removed, then instability 
 can occur. Since Voronoi diagrams in normed spaces which are not uniformly  convex  
 do appear in theory and practice, e.g., in the  plane with the Manhattan ($\ell_1$) distance, it is natural to ask whether the stability property can be generalized  to them, perhaps under additional assumptions. This paper shows that this is indeed the  case  assuming  the unit sphere of the space has a certain (non-exotic) structure and the sites satisfy a certain 
 ``general position'' condition related to it. The condition on the unit sphere is that 
 it can be decomposed into at most one ``rotund part'' and at most finitely many non-degenerate convex parts. 
 Along the way certain topological  properties of Votonoi cells (e.g., that the induced bisectors are 
 not ``fat'') are proved. 
\end{abstract}
\newpage


\section{Introduction}\label{sec:Intro}
\subsection{Background} 
Given a world (space) $X$, together with a distance function and a collection of sites (subsets) $(P_k)_{k\in K}$ in 
the world, the Voronoi cell $R_k$ associated with the site $P_k$ is the set 
of all the points in $X$ whose distance to $P_k$ is not greater than their distance to the other sites $P_j, j\neq k$.
 Voronoi diagrams have many theoretical and practical applications in various fields 
  \cite{Aurenhammer,AurenhammerKlein,CSKM2013,ConwaySloane,VoronoiCVD_Review,VoronoiWeb,GruberLek,OBSC} and therefore have been 
  investigated in an extensive way. 
  However, most of this investigation has been focused on finite dimensional Euclidean spaces with point sites, and in many  cases only in $\R^2$ and $\R^3$. Research works studying these diagrams in spaces which are  not Euclidean do exist, e.g.  \cite{Aronov2002,BCMS,BSTY1998,ChewDrysdale,CKSTW1998,Jeong,Klein,KLN2009, KopeckaReemReich,Le1996, Lee,LevenSharir1987,LPL2011,ReemISVD09,SharirAgarwal,TeleaWijk}, but they constitute a small fraction comparing with the research in the Euclidean case and they mainly focus on algorithmic or combinatorial aspects of these diagrams. 
  As a result, not much is known about Voronoi diagrams beyond Euclidean spaces and beyond the studied aspects. 

This paper studies Voronoi diagrams in a class of finite dimensional normed spaces which are not necessarily Euclidean. 
More specifically, it studies a certain continuity property of the cells, namely that small changes of the  sites, e.g., of their position or shape, yield small changes in  the shapes of the corresponding Voronoi cells. 
Recently \cite{ReemGeometricStabilityArxiv} it has been shown that the Voronoi cells do have such a stability 
property. Moreoever, it actually holds in a quite general  setting (e.g., infinitely many sites of a general form in a possibly infinite  dimensional  space) and the bounds are dimension free (no ``curse of dimensionality'' occurs). 
However, an important assumption needed for formulating this result is 
that the space satisfies a certain property called uniform convexity. Roughly speaking, this means that 
not only the unit sphere does not contain any line segment, but actually that 
given any line segment whose endpoints are on the unit sphere, if the length of this segment 
is bounded below by a certain positive number $\epsilon$, then the middle point of this segment 
 must penetrate  the unit ball by at least a positive number $\delta(\epsilon)$, uniformly for 
 all such line segments. Speaking technically, for each $\epsilon\in (0,2]$ there exists $\delta=\delta(\epsilon)\in (0,1]$  such that for all $x,y$ satisfying $|x|=|y|=1$ and $|x-y|\geq \epsilon$, 
 the inequality $|(x+y)/2|\leq 1-\delta$ holds. While, in particular, the familiar Lebesgue spaces $L_p(\Omega)$ and sequence 
 spaces $\ell_p$, $1<p<\infty$ are uniformly convex \cite{BL2000,Clarkson,GoebelReich,LindenTzafriri,Prus2001}, 
 there are useful and not less familiar spaces which are not uniformly convex. Among them, $\R^m$ ($m\geq 2$) with 
 the $\ell_1$ (Manhattan) and $\ell_{\infty}$ (max) norm.

Voronoi diagrams based on non-uniformly convex norms (mainly $\ell_1$ and $\ell_{\infty}$) 
do occur in theory and practice. A simple example is for giving a rough 
estimate on the number of customers of a given facility (e.g., a shopping center/post office) 
 located, together with its competitors,  in a flat region in which transportation is restricted to 
lines parallel to the standard axes. This is the case of Manhattan. See Figures \ref{fig:Shops2D-L1}-\ref{Shops2D-L1Pertube} for an illustration. A few additional examples can be given, e.g., in motion planning and robotics  \cite{LevenSharir1987,SchwartzSharir1987,SchwartzSharir1990} in which a convex robot is 
supposed to move in an environment with obstacles (the sites) 
and the motion path is based on the Voronoi cells and on a distance function induced by the structure of the robot (see also  Section ~\ref{sec:GeneralPosition}); in computer graphics \cite{Hausner} in which the sites are certain points  in an  image and the (centroidal $\ell_1$) Voronoi cells are tiles having good shapes which are the building blocks for a mosaic; for storage \cite{LeeWong} ($\ell_1,\ell_{\infty}$); in relation with VLSI design \cite{Papadopoulou2011,PapadopoulouLee} in which the sites are non-point components and  the $\ell_{\infty}$ Voronoi cells allow one to estimate critical areas in the circuit; in databases \cite{OnishiHoshi2008}, where the sites are vectors in a vector space and the intersections of the Voronoi cells with  respect to varying $\ell_p$ norm $p=1,2,3,\ldots, \infty$ are used for constructing a certain index structure; and possibly in analyzing (approximate) nearest neighbor search algorithms in high dimensional spaces with the $\ell_1$ or the $\ell_{\infty}$ norms \cite{AltLitan,AMNSW1998,Clarkson1999,DIIM2004,Indyk2001}.

\begin{figure}[t]
\begin{minipage}[t]{0.45\textwidth}
\begin{center}
{\includegraphics[scale=0.5]{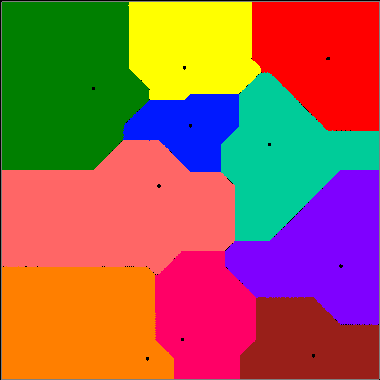}}
\end{center}
 \caption{10 shops/post offices in a flat city, modeled as points. Distance are measured using the $\ell_1$ distance. 
 The Voronoi cells of the sites are displayed.}
\label{fig:Shops2D-L1}
\end{minipage}
\hfill
\begin{minipage}[t]{0.45\textwidth}
\begin{center}
{\includegraphics[scale=0.5]{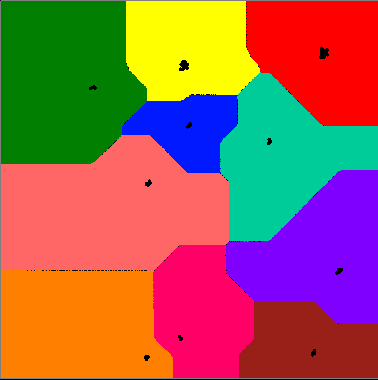}}
\end{center}
 \caption{A more realistic description: the sites have slightly moved (since their location 
 is not known exactly) and their shapes are not an ideal point. But the Voronoi cells have almost the 
 same shapes as before.}
\label{Shops2D-L1Pertube}
\end{minipage}
\end{figure}

In some of the above examples imprecision (noise, input or computational errors, etc.) is an integral 
part of the setting, but  despite this, analysis of the geometric stability of the cells seems to be absent 
here (but see Section \ref{sec:GeneralPosition}) and in most of the literature dealing with Voronoi diagrams 
(with the  exceptions of the very brief and intuitive discussions in \cite{AGMR},\cite[p. 366]{Aurenhammer},\cite{Kaplan} 
in Euclidean settings;  see \cite{ReemGeometricStabilityArxiv}).   
As a matter of fact,  issues related  to the stability/robustness of geometric structures and algorithms due to imprecision are not so common  in  the literature, at least comparing to the majority of works in which ideal conditions (infinite precision of the input, infinite precision in the calculations, etc.) are assumed. 
There are works which do deal in one way or another with such issues, e.g., 
 \cite{AGGKKRS, AGMR, AttaliBoissonnatEdelsbrunner, BandyopadhyaySnoeyink, BLMM2011, CazalDreyfus2010, 
 ChazalCohenSteinerLieutier, SteinerEdelsbrunerHarer, Halperin2002, HalperinShelton1998, HarPeled, HoffmannHopcroftKarasick1988, 
 KhanbanEdalat2003, LofflerPhD, LofflerKreveld,MLvK2010, GuibasSalesinStolfi, Smith2009, SugiharaIriInagakiImai,VGT,Weller}, 
but the  focus is mainly on issues related to combinatorial properties (in Euclidean spaces). 

To the best of the author knowledge, even when geometric stability is discussed in settings closely 
related to Voronoi diagrams, e.g., in \cite{ChazalSouflet2004} (medial axis) and focuses on 
the Hausdorff distance, as done here (see Subsection \ref{subsec:Contribution} below), the question of stability of the shapes of the Voronoi  cells 
under small perturbations of the sites have not been addressed or even raised at all. In addition, the results 
in such papers do not imply the results described in this paper even for the Euclidean case (for instance, in \cite{ChazalSouflet2004}  
one cannot take the set $\Omega$ considered there to be the union of the sites or the complement 
of this union, because $\Omega$ should be a bounded open set with a boundary which is a smooth enough 
manifold, while usually neither the union of the sites nor its complement are such a set).

Taking into account the above, one may be interested in having a theoretical result ensuring the geometric 
stability of the Voronoi cells in normed spaces which are not uniformly convex. At first it seems that 
establishing such a result is impossible, since there are simple counterexamples even in the plane 
with the $\ell_{\infty}$ norm showing that instability of the Voronoi cells may occur under arbitrary 
small perturbations of the sites \cite{ReemGeometricStabilityArxiv} (see also Example \ref{ex:2D-L1-20Cell} below). 
 However, a careful inspection of these counterexamples 
shows that the sites form certain degenerate configurations. Therefore, it is natural to ask whether in the 
common case, i.e., whenever such degenerate configurations are not formed, a stability property still holds.

\subsection{Contribution of this paper}\label{subsec:Contribution} Voronoi diagrams are considered in a class of finite dimensional normed spaces 
having a certain non-exotic unit sphere and the following is shown: If the finitely many compact sites satisfy a certain 
``general position'' assumption related to the structure of the unit sphere, if there is a positive lower bound on the distance between them, and if the  discussion is restricted to a compact and convex subset, then the Voronoi cells 
and their bisectors are stable with respect to small changes in their corresponding sites, where the changes (in the sites and the cells) 
are  measured with respect to the Hausdorff distance. Along the way certain topological 
properties of Voronoi cells which are closely related to their 
geometric stability (e.g., elimination of the possibility of ``fat'' bisectors) are discussed. 
 %
%

Roughly speaking, the characterization of the class of normed spaces which is introduced 
here is that their unit sphere has a finite face decomposition: It can have at most one (possibly zero) ``rotund part'' 
and at most finitely many (possibly zero) ``flat parts'', each of them is a nondegenerate closed and convex subset. In particular,  $(\R^m,\ell_1)$ and $(\R^m,\ell_{\infty})$ belong to this class. The general position condition is that no two points from different sites form a line segment which is parallel to a line segment contained in the unit sphere 
(this condition has been discussed, e.g., in the case of motion planning, but it is not so well-known: 
See Section ~\ref{sec:GeneralPosition}). If the sites are points, 
or very small shapes, and they are generated using the uniform distribution, then this  condition holds with high probability (probability 1 for point sites, unless computational errors and the discrete nature of computational devices are 
taken into account). Hence the stability result is mostly useful for this case (which is the common one dealt 
with in the literature). In general, a configuration of sites having an arbitrary form may violate the 
general position condition (and one may argue that the name ``general position'' is 
not so suitable in this case), but sometimes it is not violated. Anyway, the stability result for this 
case may be interesting from the theoretical point of view and may shed some light on possible problems that 
may occur regarding the cells.  

Comparing with the approaches discussed in the references mentioned a few paragraphs above, the approach of 
this paper is a physical-analytical one: No combinatorial difference between the original and the perturbed 
Voronoi cells is measured, but rather it is ensured 
that if the Hausdorff distance between the original and the perturbed sites is small enough, then so is the 
Hausdorff distance between the original and the perturbed Voronoi cells. Since however no explicit bounds 
are given (in contrast to the dimension free ones mentioned in \cite{ReemGeometricStabilityArxiv}), this continuity result is more theoretical in nature, 
analogous with establishing the convergence of a sequence/algorithm to a limit, but without giving 
estimates on the convergence speed. It somewhat resembles  the continuity result of 
Groemer \cite{Groemer} (but significant differences exist: See \cite{ReemGeometricStabilityArxiv} for a 
comparison). 

\subsection{The structure of the paper} In Section \ref{sec:Definitions} basic definitions and notations are presented. 
In Section \ref{sec:FiniteFace}  the concept of finite face decomposition is introduced and discussed. 
The general position condition is discussed in Section \ref{sec:GeneralPosition}. 
The stability theorem is presented in Section \ref{sec:MainResult} and some aspects related to its proof 
are discussed. In Section \ref{sec:CounterExamples} a few relevant counterexamples are presented, 
showing the necessity of the assumptions imposed in the theorem. A brief discussion regarding 
certain topological properties of Voronoi cells related to their geometric stability is 
given in Section \ref{sec:TopologicalProperties}. 
 A few concluding remarks are discussed in Section \ref{sec:Concluding}.  
 The proofs can be found in Section \ref{sec:ProofStability}.

\section{Notation and basic definitions}\label{sec:Definitions}
Throughout the paper, unless otherwise stated, $X$ is a nonempty compact and convex subset of $\wt{X}=\R^m$, $m\geq 2$, with some norm $|\cdot|$. The induced metric is $d(x,y)=|x-y|$. The unit sphere is $S_{\wt{X}}=\{\theta\in \wt{X}: |\theta|=1\}$. 
It is assumed that $X$ is not a singleton, for otherwise everything is trivial. The notation  $[p,x]$ and $[p,x)$ means the closed and  half open line segments connecting $p$ and $x$, i.e., the sets $\{p+t(x-p): t\in [0,1]\}$ and $\{p+t(x-p): t\in [0,1)\}$ respectively. The segment $[p,x]$ is called non-degenerate if its endpoints 
$p$ and $x$ are different. Any (real) line $L$ can be represented 
as $L=\{p+t\theta: t\in \R\}$ where $\theta$ (the direction vector) is any non-zero vector which, after normalization, 
can be assumed to have norm 1, and $p$ is some point on $L$.  A line $L=\{p+t\theta: t\in \R\}$  is parallel to a second  line $L'=\{p'+t'\theta': t'\in \R\}$ 
if it is a translation of it by some vector, i.e., $L=q+L'=\{q+v': v'\in L'\}$ for some vector 
$q$. Elementary manipulations show that this is equivalent to either $\theta=\theta'$ or $\theta=-\theta'$. 
Two non-degenerate segments $[p,a]$ and $[p',a']$ are said to be parallel if they are located on parallel lines, and this 
is equivalent to either $(a-p)/|a-p|=(a'-p')/|a'-p'|$ or $(a-p)/|a-p|=-(a'-p')/|a'-p'|$. 

%
%

\begin{defin}\label{def:dom}
Given two nonempty subsets $P,A\subseteq X$, the dominance region
$\dom(P,A)$ of $P$ with respect to $A$ is the set of all $x\in X$
whose distance to $P$ is not greater than their distance to $A$, i.e.,
\begin{equation*}
\dom(P,A)=\{x\in X: d(x,P)\leq d(x,A)\}, 
\end{equation*}
where $d(x,A)=\inf\{d(x,a): a\in A\}$. 
\end{defin}
\begin{defin}\label{def:Voronoi}
Let $K$ be a set of at least 2 elements (indices). Given a  tuple $(P_k)_{k\in K}$ of nonempty subsets
$P_k\subseteq X$, called the generators or the sites, the Voronoi diagram  induced by this tuple is 
the tuple $(R_k)_{k\in K}$ such that for all $k\in K$,
\begin{equation*}
R_k=\dom(P_k,{\underset{j\neq k}\bigcup P_j})
=\{x\in X: d(x,P_k)\leq d(x,P_j)\,\,\forall j\in K ,\, j\neq k \}.
\end{equation*}
 In other words, the Voronoi cell $R_k$ associated with the site $P_k$ is the set of all $x\in X$ whose 
 distance to $P_k$ is not greater than their distance to the other sites $P_j, j\neq k$. 
\end{defin}

\begin{defin}\label{def:Hausdorff}
Given two nonempty sets $A_1,A_2\subseteq X$, the Hausdorff distance between them is defined by
\begin{equation*}
D(A_1,A_2)=\max\{\sup_{a_1\in A_1}d(a_1,A_2),\sup_{a_2\in A_2}d(a_2,A_1)\}.
\end{equation*}
\end{defin}
Note that the Hausdorff distance $D(A_1,A_2)$ is definitely different from the usual distance 
\begin{equation*}
d(A_1,A_2)=\inf\{d(a_1,a_2): a_1\in A_1,\,a_2\in A_2\}. 
\end{equation*}
As a matter of fact, $D(A_1,A_2)\leq \epsilon$ if and only if  $d(a_1,A_2)\leq \epsilon$ for all $a_1\in A_1$ and $d(a_2,A_1)\leq \epsilon$ for all $a_2\in A_2$. In addition, if  $D(A_1,A_2)<\epsilon$, then for all $a_1\in A_1$ there exists $a_2\in A_2$ such that $d(a_1,a_2)<\epsilon$ and for all $b_2\in A_2$ there exists $b_1\in A_1$ such that $d(b_2,b_1)<\epsilon$. These properties can be used for giving an optical-geometrical explanation 
for the use of Hausdorff distance as a natural tool when discussing approximation and stability in the context of  sets:  Suppose that our resolution is at most $r$, i.e., we are not able to distinguish between two points whose distance is at most some given positive number $r$. If it is known that $D(A_1,A_2)<r$, then we cannot distinguish between the sets $A_1$ and $A_2$, at least not by inspections based only on distance measurements. As a result of the above discussion, the intuitive phrase ``two sets have almost the same shape'' can be formulated precisely: The Hausdorff distance between the sets is smaller than some given positive parameter. In this context, note that a set and a rigid transformation of it usually have different shapes. In addition, note again that only distance measurements are taken into account in the above discussion, so 
sets which may look different according to other measurements (e.g., differentials properties) but are too close 
in terms of the Hausdorff distance, are considered as equal, because the resolution parameter of the 
involved devices (eyes, magnifying glass, screen, etc.) is too coarse for distinguishing between 
the sets, and in particular, other types of measurements, such as those based on differentials properties, 
will not help. 

\section{Finite face decomposition}\label{sec:FiniteFace}
This section introduces and discusses the seemingly new notion of finite face decomposition, 
a notion which plays an important role in the stability theorem (Section \ref{sec:MainResult}). 
The setting here (and there) is $\R^m$, but the definition can be extended 
word for word to any real or complex normed space and even to vector spaces having a topology. 
It is worthwhile to note that many of the works in the computational geometry 
literature dealing with convex distance functions, e.g., \cite{BSTY1998,ChewDrysdale,CKSTW1998,Drysdale1990,IKLM1995,KLN2009,Le1997,LevenSharir1987,Ma2000,WWW1987}, 
consider unit spheres/circles  which have a finite face decomposition. 
\begin{defin}\label{def:FiniteFace}
The nonempty subset $S\subseteq \R^m$ is said to have a finite face decomposition if $S=\cup_{i=1}^{\ell+1}F_i$ 
where $\ell\in \{0,1,2,3,\ldots\}$ and: 
\begin{enumerate}[(a)]
\item $F_i$ is a closed and convex subset of $S$ for each $i\in \{1,\ldots,\ell\}$ and it is not  
a point; 
\item\label{item:MaxSegment} for each $i\in \{1,\ldots,\ell\}$ the subset $F_i$ has a certain maximality 
property: given any line segment $[s,s']\subseteq S$, if $[s,s'']\subseteq F_i$ for some 
$s''\in (s,s']$, then the whole segment $[s,s']$ is contained in $F_i$;
\item The subset $F_{\ell+1}\subseteq S$ does not contain nondegenerate line segments 
and it satisfies $F_{\ell+1}\cap F_i=\emptyset$ for all $i\in \{1,\ldots, \ell\}$. 
\end{enumerate}
The subsets $F_i$, $i\in \{1,\ldots,\ell\}$  are called the flat parts and $F_{\ell+1}$ is 
called the rotund part. If $S$ is the unit sphere, then $\R^m$ with 
the induced norm is said to have a finite face decomposition property. 
\end{defin}

\begin{example}\label{ex:UnitSphere2D}
As is well-known, the unit sphere of $\R^m$ with the $\ell_{\infty}$ norm, namely with  $\|(x_1,\ldots,x_m)\|_{\infty}=\max\{|x_i|: i=1,\ldots,m\}$), is a cube. Since it can be decomposed into $\ell=2m$ rectangular faces, it has a finite face decomposition. 
Here $F_{\ell+1}=\emptyset$. By the same reasoning the unit sphere of $\R^m$ with the $\ell_{1}$ norm  $\|(x_1,\ldots,x_m)\|_{1}=\sum_{i=1}^m |x_i|$ has a finite face decomposition. The standard Euclidean sphere has a trivial finite face decomposition 
where $\ell=0$ and $F_{\ell+1}$ is the sphere itself (because it does not contain any nondegenerate line segment; 
so is the case in any strictly convex norm space such as the $\ell_p$ spaces, $1<p<\infty$). 
In general, any $m$-dimensional compact polyhedron in $\R^m$ which is symmetric with respect to 
the origin and the origin is an interior point of it (with respect to the Euclidean norm) induces a new and equivalent 
norm on $\R^m$  whose unit sphere is the boundary of this polyhedron. This unit sphere has a finite face decomposition 
with $\ell$=the number of faces, $F_{\ell+1}=\emptyset$. See Figure \ref{fig:UnitSphere2D} for several illustrations. 
\end{example}

\begin{example}\label{ex:UnitSphereStrange3D}
Consider the unit sphere of the Euclidean $\R^3$ cut by the planes $x_3=\alpha$ and $x_3=-\alpha$, 
 where $\alpha\in (0,1)$  is fixed. Let $S=F_1\cup F_2\cup F_3$  where 
\begin{equation*}
F_1=\{(x_1,x_2,x_3)\in\R^3: x_1^2+x_2^2+x_3^2\leq  1,\,x_3=\alpha\}, 
\end{equation*}
\begin{equation*}
F_2=\{(x_1,x_2,x_3)\in\R^3: x_1^2+x_2^2+x_3^2 \leq 1,\,x_3=-\alpha\},  
\end{equation*}
\begin{equation*}
F_3=\{(x_1,x_2,x_3)\in\R^3: x_1^2+x_2^2+x_3^2=1,\,x_3\in (\alpha,-\alpha)\}. 
\end{equation*}
See Figure \ref{fig:UnitSphereStrange3D} for an illustration. This is a finite face decomposition of $S$. 
Since $S$ is symmetric with respect to the origin and the origin is an interior point 
of the body induces by $S$, it follows that $S$ induces a new and equivalent 
norm on $\R^3$, and the corresponding unit sphere is $S$. This unit sphere has a 
finite face decomposition.
\end{example}

\begin{figure}[t]
\begin{minipage}[t]{0.45\textwidth}
\begin{center}
{\includegraphics[scale=0.6]{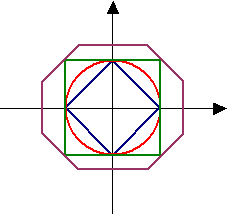}}
\end{center}
 \caption{Several 2D unit spheres with a finite face decomposition: $\ell_1$ (rhombus), $\ell_2$ (circle), $\ell_{\infty}$ 
 (square), something else (octagon).}
\label{fig:UnitSphere2D}
\end{minipage}
\hfill
\begin{minipage}[t]{0.48\textwidth}
\begin{center}
{\includegraphics[scale=0.6]{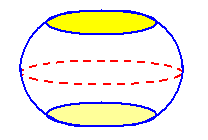}}
\end{center}
 \caption{The unit sphere described in Example \ref{ex:UnitSphereStrange3D}.}
\label{fig:UnitSphereStrange3D}
\end{minipage}
\end{figure}

\begin{example}\label{ex:infinite}
For each $n\in\N$ define the pair $(x_n,y_n)\in \R^2$ by $x_n=1/n$ and $y_n=1-(1/n)$. 
Now connect by a line segment each of the pairs 
$(x_n,y_n)$ and $(x_{n+1},y_{n+1})$. Do the same with the pairs $(-x_n,-y_n)$ and $(-x_{n+1},-y_{n+1})$. 
Finally, connect $(x_{\infty},y_{\infty})=(0,1)$ to $(-x_{1},y_{1})=(-1,0)$ and also 
$(x_{\infty},-y_{\infty})=(0,-1)$ to $(x_1,y_1)=(1,0)$. Let $S$ be the union 
of all of these line segments. The set $S$ is a polygon with infinitely many edges (all of them 
are contained in the  first and the third quadrants) and its vertices are the pairs $(\pm x_n,\pm y_n)$, $n\in \N\cup \{\infty\}$. 
It induces a new unit sphere in the plane $\R^2$ which 
does not have any finite set decomposition. Intuitively, this is because the sets $F_i$ should be the line segments, 
but there are infinitely many line segments. 
\end{example}

\begin{example}\label{ex:cylinder}
The ``cylinder with covers'' $S=\{(x_1,x_2,x_3): |x_1|^2+|x_2|^2=1,\, x_3\in [-1,1]\}$ does not have a finite 
face decomposition. Intuitively, this is because the cylinder contains infinitely many (vertical) line segments 
and each of them must be a ``flat'' face. 
\end{example}

\section{The ``general position'' condition}\label{sec:GeneralPosition}
The goal of this section is to discuss further the ``general position'' assumption which the sites 
$(P_k)_{k\in K}$ should satisfy in order to allow the cells to be geometrically stable with respect to 
small changes of the sites. The precise condition is stated as follows: 

\begin{condition}\label{cond:ParallelLine}
For all $j,k\in K$, $j\neq k$ and for all $p_j\in P_j, p_k\in P_k$, $p_j\neq p_k$, 
the nondegenerate line segment $[p_j,p_k]$ is not parallel to any nondegenerate line segment 
contained in the unit sphere of the space. 
\end{condition}
\begin{figure}[t]
\begin{minipage}[t]{0.45\textwidth}
\begin{center}
{\includegraphics[scale=0.43]{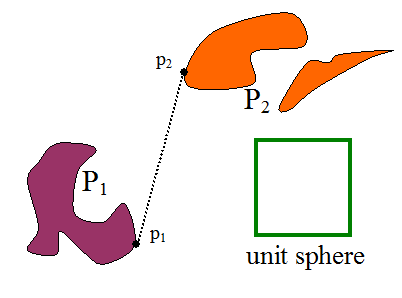}}
\end{center}
 \caption{Illustration of Condition \ref{cond:ParallelLine}: two sites in $(\R^2,\ell_{\infty})$.} 
\label{fig:ParallelLine}
\end{minipage}
\end{figure}
Condition ~\ref{cond:ParallelLine} may perhaps seem somewhat complicated at  first glance, 
but in our opinion this is not the case. It simply says that the sites should be located 
in  a certain way which takes into account the structure of the unit sphere: 
When connecting points from different sites by a line segment, this  segment should not be 
parallel to a line segment contained in the unit sphere. See Figure ~\ref{fig:ParallelLine} 
for an illustration of the condition in the case of $\R^2$ with the $\ell_{\infty}$ norm (and two sites). 
Nevertheless, it may happen that an equivalent (and perhaps even simpler) condition exists. 

It is interesting to note that a version of Condition ~\ref{cond:ParallelLine} appears in the 
study of motion planning of a convex two-dimensional polygonal robot \cite[p. 11, Assumption ~(b)]{LevenSharir1987}.  
It says that no boundary edge of the robot (considered as the unit ball) is parallel to a boundary edge of a 
convex polygonal obstacle (site) or to a line joining a pair of boundary corners of these obstacles.  
 In Remark (2) later it is said that Assumption ~(b) is required for ensuring that the induced Voronoi diagram 
(which, here, is identified with the bisectors, namely with the sets $B_k=\{x\in X: d(x,P_k)=d(x,\cup_{j\neq k} P_j), \}$) 
will be one-dimensional. When this condition is violated then degenerate 
configurations can occur but they can be handled by making infinitesimal perturbations to the sites 
(small rotations). For the purpose of the analysis of a 
related algorithm and the induced diagram, additional assumptions are imposed.
The discussion in \cite[p. 11]{LevenSharir1987} is definitely related to 
stability. However, it is very brief and the focus is more combinatorial or topological 
(the effect of perturbation on topological properties of the bisectors) rather than geometric 
(the effect of perturbation on the shape of the cells or the shape of 
the bisectors). Another related version of Condition ~\ref{cond:ParallelLine}, similar 
to Assumption (b) mentioned above but in $\R^3$, can be found  in \cite[p. 86]{KoltunSharir2004} 
in the context of studying the combinatorial complexity of 3-dimensional Voronoi diagrams 
induced by a polyhedral distance function (the unit ball is 
a polyhedron) and polyhedral sites. 

In fact, since such issues are an integral part of a realistic analysis of the setting, 
the results of this paper may be found applications to \cite{LevenSharir1987}. It seems 
that one such an application is to weaken Assumption ~(b) by removing its first part. Another 
application is Corollary \ref{cor:StabilityBisectors} (Section ~\ref{sec:MainResult}) which ensures that the bisectors 
are stable with respect to small perturbations of the sites. On the other hand, the second part of 
Assumption ~(b) in \cite[p. 11]{LevenSharir1987} is weaker than Condition ~\ref{cond:ParallelLine} 
here unless the sites are points. Hence it is interesting whether Condition ~\ref{cond:ParallelLine} 
can be weakened so that, say, only extreme points $p_j\in P_j$ and $p_k\in P_k$, $j\neq k$  
will be taken into account ($p\in C$ is an extreme point of a set $C$ whenever it cannot be 
written as a strict convex combination of points from $C\backslash \{p\}$; 
in particular, if $C$ is a convex polytope, then its corners are the extreme points).

In general, it seems that  Condition ~\ref{cond:ParallelLine}  is not mentioned frequently and is 
not easily found in the literature, but related variations of it in the case of $(\R^2,\ell_1)$ and 
two point sites are more familiar \cite[p. 390, Figure 37]{Aurenhammer}, \cite[pp. 19-20]{Klein}, \cite[p. 605, Fig. 1(b)]{Lee}, \cite[p. 191, Figure 3.7.2]{OBSC}. In this case either $P_1=\{(-1,-1)\}$ and $P_2=\{(1,1)\}$ or $P_1=\{(-1,1)\}$ and $P_2=\{(1,-1)\}$ and hence the segment $[p_1,p_2]$ generated by the (unique) 
points $p_1=(-1,-1)\in P_1$ and $p_2=(1,1)\in P_2$ is parallel to two segments contained in the unit sphere 
(see Figure \ref{fig:UnitSphere2D} and see also \cite[p. 22]{JKTtech1994} for a closely related example in 
$(\R^2,\ell_{\infty})$ and two point sites). As in \cite[p. 11]{LevenSharir1987}, it 
is mentioned there  that the bisectors induced by this configuration 
are problematic and exotic because they are ``fat'' and that in the typical 
situation (where the sites are not aligned in the above mentioned configuration) the bisectors 
are well-behaved. Some attempts are made in order to 
overcome the problem of fat bisectors by, say, redefining the cells so they will not contain 
the fat bisectors (in this paper the definition  of Voronoi cells, namely Definition ~\ref{def:Voronoi}, 
is not modified). A more detailed discussion about this issue can be found in Section ~\ref{sec:TopologicalProperties}. 

%
%
 
As a final remark regarding Condition ~\ref{cond:ParallelLine}, it should be emphasized that the ``general position'' 
property required by this condition is significantly different from the common ``general position'' condition 
frequently found in the computational geometry literature dealing with Voronoi diagrams. 
In its simplest form, where the setting is the Euclidean plane,  
this condition says that no 3 distinct (point) sites are located on the same line and no 4 sites are 
located on the same circle. In contrast, in Condition ~\ref{cond:ParallelLine} there is no 
problem at all in both cases even if only point sites are considered. For instance, 3 point sites that 
are located on the same line will not induce instability as long as this line is not parallel to a line 
segment contained in the unit sphere of the space. 

\section{The stability theorem and some aspects related to its proof}\label{sec:MainResult}
In this section the stability theorem and a related corollary are formulated and issues 
related to their proof are briefly discussed. 
For a simple  illustration of the theorem, see Figures \ref{fig:2D-L0}-\ref{fig:2D-L0Pertube}.

\begin{thm}\label{thm:stabilityNUC}
Let $X$ be a compact and convex subset of $(\R^m,|\cdot|)$, $2\leq m\in \N$. 
Assume that the unit sphere of the space has a finite face decomposition. 
Let $K$ be a finite set of indices and let $(P_k)_{k\in K}$  be a given finite tuple 
of nonempty compact  subsets of $X$. For each $k\in K$ let  $A_k=\bigcup_{j\neq k}P_j$ 
and let $R_k=\dom(P_k,A_k)$ be the Voronoi cell corresponding to $P_k$.  Suppose  that 
\begin{equation}\label{eq:eta}
\eta:=\min\{d(P_k,P_j): j,k\in K,\,j\neq k\}>0.
\end{equation}
Suppose also that the sites are located in a ``general position'' with respect to the unit sphere, namely, 
that Condition \ref{cond:ParallelLine} holds. Then for each $\epsilon>0$  there 
exists $\Delta>0$  such that if  $(P'_k)_{k\in K}$ is any tuple of nonempty compact  
subsets of $X$ satisfying  $D(P_k,P'_k)<\Delta$ for each $k\in K$, then $D(R_k,R'_k)<\epsilon$ 
for each $k\in K$. 
\end{thm}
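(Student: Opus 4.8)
The plan is to argue by contradiction, using the compactness of $X$ and the elementary estimate that the point-to-set distance is $1$-Lipschitz in its set argument with respect to the Hausdorff distance: for nonempty $P,P'\subseteq X$ and every $x$ one has $|d(x,P)-d(x,P')|\le D(P,P')$. Hence if $\max_i D(P_i,P_i')$ is small, the functions $d(\cdot,P_i')$ are uniformly close on $X$ to $d(\cdot,P_i)$. Fixing $k$ and recalling that $R_k=\{x\in X: d(x,P_k)\le d(x,P_j)\ \text{for all}\ j\neq k\}$ is closed, hence compact, and that $D(R_k,R_k')$ is the larger of the two one-sided suprema, I would bound $\sup_{x'\in R_k'}d(x',R_k)$ and $\sup_{x\in R_k}d(x,R_k')$ separately.

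First I would dispose of $\sup_{x'\in R_k'}d(x',R_k)$, which uses neither Condition~\ref{cond:ParallelLine} nor the finite face decomposition. If it failed there would be $\epsilon>0$, perturbations $P_i^{(n)}$ with $\max_i D(P_i,P_i^{(n)})\to 0$, and points $x_n$ in the perturbed cell $R_k^{(n)}$ with $d(x_n,R_k)\ge\epsilon$; after passing to $x_n\to x\in X$, each perturbed inequality $d(x_n,P_k^{(n)})\le d(x_n,P_j^{(n)})$ passes to the limit (by the Lipschitz estimate and the continuity of $d(\cdot,P_i)$) to give $d(x,P_k)\le d(x,P_j)$, so $x\in R_k$, contradicting $d(x_n,R_k)\ge\epsilon$.

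The substance of the theorem lies in the opposite direction $\sup_{x\in R_k}d(x,R_k')$, and here I would reduce matters to a purely topological property of the \emph{unperturbed} cell: that $R_k$ is the closure of its strict dominance region $W_k:=\{x\in X: d(x,P_k)<d(x,P_j)\ \text{for all}\ j\neq k\}$. Granting $R_k=\overline{W_k}$, argue once more by contradiction: for perturbations $P_i^{(n)}$ with $\max_i D(P_i,P_i^{(n)})\to 0$ and $x_n\in R_k$ with $d(x_n,R_k^{(n)})\ge\epsilon$, pass to $x_n\to x\in R_k$ and pick $y\in W_k$ with $|x-y|<\epsilon/2$; the strict inequalities defining $y\in W_k$ survive all small enough perturbations, so $y\in R_k^{(n)}$ for large $n$ and $d(x,R_k^{(n)})<\epsilon/2$, contradicting $d(x_n,R_k^{(n)})\ge\epsilon$ and $x_n\to x$. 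Since $K$ is finite, the two directions combine into a single $\Delta$ serving all $k$. The hypothesis $\eta>0$ already enters here: it guarantees $W_k\neq\emptyset$, indeed $P_k\subseteq W_k$ because any $x\in P_k$ has $d(x,P_k)=0<\eta\le d(x,P_j)$.

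It remains to prove the density statement $R_k\subseteq\overline{W_k}$, which is the main obstacle and the only place where Condition~\ref{cond:ParallelLine} and the finite face decomposition (Definition~\ref{def:FiniteFace}) are used. Let $x\in R_k$ with $r:=d(x,P_k)>0$, fix a nearest point $p\in P_k$, and move inward along $x_s:=x+s(p-x)$, $s\in(0,1]$. Toward $p$ the distance to $P_k$ drops at the maximal rate, $d(x_s,P_k)\le(1-s)r$, whereas for every competitor the reverse triangle inequality only yields $d(x_s,P_j)\ge(1-s)r$. I would show that for each \emph{active} competitor (those $j$ with $d(x,P_j)=r$) this last inequality is strict for small $s$, so that $x_s\in W_k$ and $x=\lim_{s\to 0^+}x_s\in\overline{W_k}$; inactive competitors are harmless by continuity. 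Indeed, equality $d(x_s,P_j)=(1-s)r$ would force, for some nearest point $q\in P_j$, the triangle equality $|x-q|=|x_s-q|+|x-x_s|$; since $x-x_s$ points along $x-p$, this places the unit vectors $(x-p)/r$ and $(x-q)/r$ on a common segment of the unit sphere, making $[p,q]$ --- a segment joining points of the distinct sites $P_k$ and $P_j$ --- parallel to a nondegenerate segment of the sphere, in violation of Condition~\ref{cond:ParallelLine}. The hard part will be to upgrade this pointwise non-degeneracy into a strict gap that is uniform over the (possibly infinite) compact family of nearest points $q\in P_j$ and over the finitely many active $j$ simultaneously: this is exactly where the finite face decomposition is essential, since it exhibits the forbidden ``parallel'' directions as a finite union of closed, lower-dimensional direction sets, away from which the reverse triangle estimate improves by a fixed positive amount. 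Equivalently, this is the statement that under Condition~\ref{cond:ParallelLine} the active bisectors are not ``fat,'' so that every point of $R_k$ is genuinely approachable from the strict interior; establishing this non-fatness, in its uniform quantitative form, is the crux of the whole argument.
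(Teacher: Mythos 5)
Your architecture is sound and it is genuinely different from the paper's. The paper does not split the Hausdorff distance into its two one-sided halves; instead it feeds everything into Proposition~\ref{prop:stabilityVor}, which demands the quantitative gap conditions \eqref{eq:property_k}--\eqref{eq:property'_k} for the \emph{perturbed} configurations as well as the original ones. Since an arbitrarily small perturbation can destroy Condition~\ref{cond:ParallelLine}, the paper must replace that condition by the robust quantitative statement $d(\wh{P_k,A_k},\wh{S_{\wt{X}}})>0$, and this is exactly where the finite face decomposition enters (Lemma~\ref{lem:NoParallelPA}), together with a compactness argument producing a gap $\lambda$ uniform over a whole class of site pairs (Lemma~\ref{lem:lambda_uni}). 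Your route never needs any property of the perturbed cells beyond their definition: the direction $\sup_{x'\in R_k'}d(x',R_k)$ is soft, and the direction $\sup_{x\in R_k}d(x,R_k')$ only requires that a \emph{fixed} point of your strict region $W_k$ survive all sufficiently small perturbations, which is automatic because $K$ is finite. So your reduction of the theorem to the density statement $R_k=\overline{W_k}$ is correct.

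Where you go astray is in your diagnosis of that remaining step. The density $R_k\subseteq\overline{W_k}$ is precisely \eqref{eq:closure} of Theorem~\ref{thm:TopologicalProperties}, and it needs \emph{no} uniform quantitative gap and \emph{no} finite face decomposition: the pointwise statement suffices for your argument, and it follows from attainment plus Lemma~\ref{lem:triangle}. Concretely, fix $s\in(0,1]$ and an active $j$. If $d(x_s,P_j)=(1-s)r$, then by compactness of $P_j$ this infimum is \emph{attained} at some $q\in P_j$; for that single $q$ one gets the triangle equality $|x-q|=|x-x_s|+|x_s-q|$, hence by Lemma~\ref{lem:triangle} a nondegenerate segment of the unit sphere containing $(x-p)/r$ and $(x-q)/r$, hence $[p,q]$ parallel to a sphere segment, contradicting Condition~\ref{cond:ParallelLine} (this is the content of Lemma~\ref{lem:ParallelUnitSphere} and Lemma~\ref{lem:SegmentInequality} in the paper). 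There is nothing to uniformize over $q$: you only need the strict inequality $d(x_s,P_j)>(1-s)r$ for each fixed $s$ and each of the finitely many $j$, not a gap bounded below independently of $s$, $q$ and $j$. A striking consequence, worth double-checking, is that your proof, once the density step is written out this way, never invokes the finite face decomposition; that hypothesis is needed only for the paper's quantitative route through Proposition~\ref{prop:stabilityVor}, and the concluding remarks of the paper explicitly leave open whether it can be weakened.
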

\begin{cor}\label{cor:StabilityBisectors}
Under the assumptions of Theorem \ref{thm:stabilityNUC}, the bisectors 
$B_k:=\{x\in X: d(x,P_k)=d(x,A_k)\}$ are geometric stable: for each $\epsilon>0$  there 
exists $\Delta>0$ (the same one from Theorem \ref{thm:stabilityNUC}) 
such that if  $(P'_k)_{k\in K}$ is any tuple of nonempty compact  
subsets of $X$ satisfying  $D(P_k,P'_k)<\Delta$ for each $k\in K$, then $D(B_k,B'_k)\leq\epsilon$ 
for each $k\in K$. 
\end{cor}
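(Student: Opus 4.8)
The plan is to realize the bisector as the intersection of two regions each of which is already controlled by Theorem \ref{thm:stabilityNUC}, and then to promote the stability of the two regions to the stability of their intersection by a soft intermediate-value argument using the convexity of $X$. The only hypotheses that carry real geometric content — the finite face decomposition and Condition \ref{cond:ParallelLine} — will be absorbed entirely into the invocation of the theorem; the passage from cells to bisectors is then purely topological.

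First I would introduce, for each $k$, the continuous function $g_k\colon X\to\R$ given by $g_k(x)=d(x,P_k)-d(x,A_k)$; continuity holds because $y\mapsto d(y,C)$ is $1$-Lipschitz for every nonempty $C$. Then $R_k=\{x\in X: g_k(x)\le 0\}$, and setting $Q_k:=\{x\in X: g_k(x)\ge 0\}=\dom(A_k,P_k)$ one has $B_k=R_k\cap Q_k$ and $R_k\cup Q_k=X$. The key observation is the identity $Q_k=\bigcup_{j\ne k}R_j$: indeed $x\in Q_k$ means $d(x,A_k)\le d(x,P_k)$, i.e. a globally nearest site to $x$ may be chosen among $\{P_j:j\ne k\}$, which is exactly the statement that $x$ lies in $R_j$ for some $j\ne k$. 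The same identities hold verbatim for the perturbed data, giving $B_k'=R_k'\cap Q_k'$ with $Q_k'=\bigcup_{j\ne k}R_j'$.

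With this in hand the stability of $Q_k$ costs nothing beyond the theorem, and in particular requires no change of $\Delta$: the Hausdorff distance is monotone under finite unions, $D(\bigcup_i C_i,\bigcup_i C_i')\le\max_i D(C_i,C_i')$, so from the conclusion $D(R_j,R_j')<\epsilon$ for all $j$ I obtain simultaneously $D(R_k,R_k')<\epsilon$ and $D(Q_k,Q_k')=D\bigl(\bigcup_{j\ne k}R_j,\bigcup_{j\ne k}R_j'\bigr)<\epsilon$. This is precisely why the very same $\Delta$ produced by Theorem \ref{thm:stabilityNUC} serves in the corollary, and it is the main conceptual step of the argument.

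It remains to deduce $D(B_k,B_k')\le\epsilon$ from $D(R_k,R_k')<\epsilon$ and $D(Q_k,Q_k')<\epsilon$; this intermediate-value step is where I expect the only genuine (though mild) work to lie. Given $x\in B_k=R_k\cap Q_k$, the property of the Hausdorff distance recorded just after Definition \ref{def:Hausdorff} yields $r'\in R_k'$ and $q'\in Q_k'$ with $d(x,r')<\epsilon$ and $d(x,q')<\epsilon$. Since $X$ is convex the segment $[r',q']$ lies in $X$, and since $g_k'(r')\le 0\le g_k'(q')$ the continuity of $g_k'$ furnishes, by the intermediate value theorem along that segment, a point $w$ with $g_k'(w)=0$, i.e. $w\in B_k'$; convexity of the ball centered at $x$ then forces $d(x,w)<\epsilon$, whence $d(x,B_k')<\epsilon$. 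The primed and unprimed data play symmetric roles, so the same reasoning gives $d(x',B_k)<\epsilon$ for every $x'\in B_k'$, and therefore $D(B_k,B_k')\le\epsilon$. No degeneracies intervene, since all the sets are nonempty: $P_k\subseteq R_k$ and $A_k\subseteq Q_k$, while $B_k=R_k\cap Q_k\ne\emptyset$ because $R_k$ and $Q_k$ are nonempty closed sets covering the connected set $X$, and likewise after perturbation.
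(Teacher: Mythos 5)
Your proof is correct, and it takes a genuinely different route from the paper's. The paper argues by contradiction: assuming some $x\in B_k$ has $d(x,B'_k)>\epsilon$, it shows the perturbed sign function $f(z)=d(z,P'_k)-d(z,A'_k)$ must have constant sign on a ball around $x$, and then derives a contradiction in two cases, the second of which invokes Theorem \ref{thm:TopologicalProperties} (the identity $B_k=\partial(R_k)$) together with the covering property $\bigcup_j R_j=X$ to produce a nearby point of some other cell $R_j$. Your argument replaces all of this with the identities $B_k=R_k\cap Q_k$, $Q_k=\dom(A_k,P_k)=\bigcup_{j\neq k}R_j$ (which is where finiteness of $K$ enters, so that $d(x,A_k)=\min_{j\neq k}d(x,P_j)$ is attained), the union estimate of Lemma \ref{lem:HausdorffUnion}, and a direct intermediate-value argument along the segment $[r',q']\subseteq X$; the convexity of balls then bounds $d(x,w)$. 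All steps check out, including the nonemptiness of $B_k$ and $B'_k$ via connectedness of $X$, and the same $\Delta$ indeed suffices since only the conclusion $D(R_j,R'_j)<\epsilon$ is consumed. What your route buys is that the corollary becomes a purely formal consequence of Theorem \ref{thm:stabilityNUC} plus convexity of $X$ and finiteness of $K$: you never need the nontrivial boundary identity $B_k=\partial(R_k)$, hence no second appeal to Condition \ref{cond:ParallelLine} or Lemma \ref{lem:SegmentInequality}, and the argument would transfer verbatim to any setting where cell stability is known. What the paper's route buys is mainly conceptual alignment with its ``no fat bisectors'' theme, since it exhibits the bisector explicitly as the cell boundary; logically your proof is the more economical of the two.
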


\begin{figure}[t]
\begin{minipage}[t]{0.45\textwidth}
\begin{center}
{\includegraphics[scale=0.59]{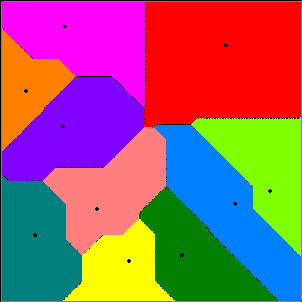}}
\end{center}
 \caption{10 point sites and their cells in a square in $(\R^2,\ell_{\infty})$. 
 Condition \ref{cond:ParallelLine} holds.}
\label{fig:2D-L0}
\end{minipage}
\hfill
\begin{minipage}[t]{0.45\textwidth}
\begin{center}
{\includegraphics[scale=0.59]{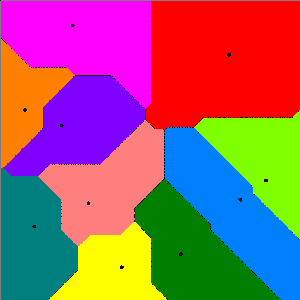}}
\end{center}
 \caption{The sites have slightly perturbed. The shapes of the cells have slightly perturbed.}
\label{fig:2D-L0Pertube}
\end{minipage}
\end{figure}

The proof of Theorem \ref{thm:stabilityNUC} is quite long and technical, and  it is given 
in Section \ref{sec:ProofStability}.  The proof itself is partly inspired by and partly based on arguments and assertions given in  \cite{ReemGeometricStabilityArxiv}. In particular, there is a heavy use of spherical arguments (unit vectors, 
  unit sphere). However, there are considerable differences between both 
proofs and some of the involved  ideas since a key step in \cite{ReemGeometricStabilityArxiv}, namely 
Lemma 5.5 in the extended abstract version and Lemma 8.9 in the current arXiv version (v2), aimed at proving 
a certain geometric estimate, is heavily  based on the assumption  that the normed space is uniformly 
convex. Important tools used here in order to overcome the   difficulty of lacking of this property are 
geometric arguments based on the compactness of the involved sets, arguments based  on finiteness, 
an interesting almost forgotten geometric characterization of equality in  the triangle inequality (instead of the forgotten 
strong triangle inequality of Clarkson \cite[Theorem 3]{Clarkson} used in \cite{ReemGeometricStabilityArxiv}), 
and a certain quantitative way of measuring how far a given vector is from being parallel to a non-degenerate 
segment contained in the unit sphere. The proof of Corollary \ref{cor:StabilityBisectors} 
is a consequence of Theorem \ref{thm:stabilityNUC} and the fact that the bisectors coincide with 
the boundaries of the cells (see Theorem \ref{thm:TopologicalProperties} in Section \ref{sec:TopologicalProperties}). 

The geometric characterization of equality in the 
triangle inequality mentioned above is given below. It has been rediscovered a few 
times but has not become a mainstream knowledge. Apparently it was first considered by 
Golab and H\"arlen \cite{GolabHarlen1931} and later by  Alt \cite{Alt1937}. 
See also \cite[p. 100]{MartiniSwanepoelWeiss2001} for a short and accessible proof and see 
Figure ~\ref{fig:EqualityTriangleInequality} for an illustration. 
\begin{lem}\label{lem:triangle} 
Let $x_1$ and $x_2$ be two non-zero vectors in a normed space. Let $\wh{x_i}=x_i/|x_i|,\,i=1,2$.
Then $|x_1+x_2|=|x_1|+|x_2|$ if and only if the segment $[\wh{x_1},\wh{x_2}]$ is contained in the unit sphere. 
\end{lem}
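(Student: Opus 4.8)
The plan is to reduce everything to the convexity of the norm along the segment $[\wh{x_1},\wh{x_2}]$. Write $a=|x_1|$ and $b=|x_2|$, both positive, so that $x_1+x_2 = a\wh{x_1}+b\wh{x_2}$, and define $f\colon[0,1]\to\R$ by $f(t)=|(1-t)\wh{x_1}+t\wh{x_2}|$. As the composition of an affine map with a norm, $f$ is convex, and since $|\wh{x_1}|=|\wh{x_2}|=1$ the triangle inequality gives $f(t)\le 1$ for every $t$, with $f(0)=f(1)=1$. The crux is the single algebraic identity obtained at $t^{*} = b/(a+b)\in(0,1)$: there one has $(1-t^{*})\wh{x_1}+t^{*}\wh{x_2} = (a\wh{x_1}+b\wh{x_2})/(a+b) = (x_1+x_2)/(a+b)$, whence $f(t^{*}) = |x_1+x_2|/(a+b)$. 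Thus the equality $|x_1+x_2|=|x_1|+|x_2|$ is exactly the statement that $f(t^{*})=1$.

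For the easy implication, if the segment $[\wh{x_1},\wh{x_2}]$ lies in the unit sphere then in particular the interior point corresponding to $t^{*}$ has norm $1$, so $f(t^{*})=1$ and therefore $|x_1+x_2|=a+b$.

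For the converse I would argue that a single interior equality forces equality along the whole segment, using convexity. Assuming $|x_1+x_2|=a+b$, i.e.\ $f(t^{*})=1$, suppose toward a contradiction that $f(t_0)<1$ for some $t_0\in[0,1]$; necessarily $t_0\neq t^{*}$. If $t_0<t^{*}$ then $t^{*}$ is a proper convex combination of $t_0$ and $1$, so convexity gives $f(t^{*})<1$; symmetrically, if $t_0>t^{*}$ then $t^{*}$ is a proper convex combination of $0$ and $t_0$, again giving $f(t^{*})<1$. Either way this contradicts $f(t^{*})=1$, so $f\equiv 1$ on $[0,1]$, which says precisely that every point of $[\wh{x_1},\wh{x_2}]$ has norm $1$, i.e.\ the segment is contained in the unit sphere.

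I do not expect a serious obstacle; the only point requiring a little care is the propagation step in the converse, where one must invoke genuine convexity (not merely the sublinear upper bound) to rule out any dip below $1$. It is also worth noting that the degenerate configurations are handled automatically: when $\wh{x_1}=\wh{x_2}$ the segment is a single sphere point and both sides of the equivalence hold, whereas when $\wh{x_1}=-\wh{x_2}$ the midpoint of the segment is the origin, so $f$ dips to $0$ and both sides fail.
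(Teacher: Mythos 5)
Your argument is correct. Note, however, that the paper does not supply its own proof of this lemma: it records it as a classical fact (attributed to Golab--H\"arlen and Alt) and refers the reader to Martini--Swanepoel--Weiss for ``a short and accessible proof,'' so there is no in-paper argument to compare against. Your route --- parametrize the segment by $f(t)=|(1-t)\wh{x_1}+t\wh{x_2}|$, observe that $f$ is convex with $f\le 1$ and $f(0)=f(1)=1$, identify the hypothesis $|x_1+x_2|=|x_1|+|x_2|$ with the single interior equality $f(t^{*})=1$ at $t^{*}=|x_2|/(|x_1|+|x_2|)$, and then use strict propagation of convexity to force $f\equiv 1$ --- is a clean, self-contained proof. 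The one step that genuinely needs convexity (rather than just the sublinear bound $f\le 1$) is exactly the one you flag: ruling out a dip $f(t_0)<1$ by writing $t^{*}$ as a proper convex combination of $t_0$ with the appropriate endpoint; your two-case treatment handles this correctly, and the degenerate configurations $\wh{x_1}=\wh{x_2}$ and $\wh{x_1}=-\wh{x_2}$ are disposed of as you say. This is essentially the standard argument (equivalently: a convex function attaining its maximum at an interior point is constant), so it would serve perfectly well as the omitted proof.
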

\begin{figure}[t]
\begin{minipage}[t]{0.45\textwidth}
\begin{center}
{\includegraphics[scale=0.4]{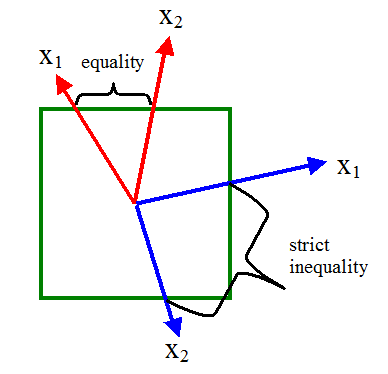}}
\end{center}
 \caption{Illustration of Lemma \ref{lem:triangle} in $(\R^2,\ell_{\infty})$.}
\label{fig:EqualityTriangleInequality}
\end{minipage}
\end{figure}

\section{Counterexamples}\label{sec:CounterExamples}

In this section a few counterexamples related to Theorem \ref{thm:stabilityNUC} are discussed. 
They show that the assumptions imposed in the theorem are necessary.

\begin{figure}[t]
\begin{minipage}[t]{0.45\textwidth}
\begin{center}
{\includegraphics[scale=0.56]{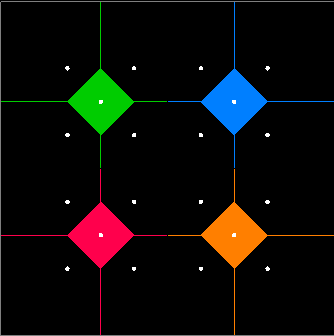}}
\end{center}
 \caption{The figure of Example \ref{ex:2D-L1-20Cell}: 20 point sites in a square in $(\R^2,\ell_1)$. 
 Only four cells are displayed.}
\label{fig:CounterExample2DL1BeforePerturbe}
\end{minipage}
\hfill
\begin{minipage}[t]{0.48\textwidth}
\begin{center}
{\includegraphics[scale=0.56]{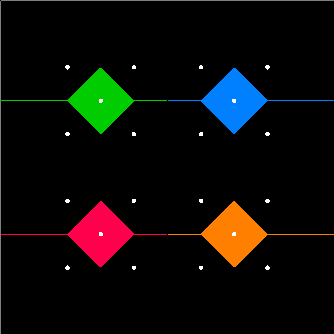}}
\end{center}
 \caption{The non-center sites have slightly moved (in the first component) towards the center site. 
 The vertical rays of the center cells have disappeared. }
\label{fig:CounterExample2DL1AfterPerturbe}
\end{minipage}
\end{figure}

\begin{example}\label{ex:2D-L1-20Cell}
If the sites do not satisfy the ``general position'' assumption (Condition \ref{cond:ParallelLine}), 
i.e., some of them form line segments parallel to segments contained in the unit sphere, then the Voronoi cells 
may not be stable as shown in Figures  \ref{fig:CounterExample2DL1BeforePerturbe}-\ref{fig:CounterExample2DL1AfterPerturbe}. Here the setting is $\R^2$ with the $\ell_1$ norm. There are 20 point sites in the rectangle  $X=[-10,10]\times [-10,10]$. The sites form a symmetric structure composed of 4 groups of 5 sites in each group. The bottom left group consists of $P_1=\{(-6,-6)\},P_2=\{(-2,-6)\}, P_3=\{(-6,-2)\}, P_4=\{(-2,-2)\}$ and the center site $P_5=\{(-4,-4)\}$; the other sites are obtained by translating this group with the vectors $(8,0)$, $(0,8)$, $(8,8)$. Only the cells of $P_5,P_{10}, P_{15}, P_{20}$ are displayed. A small perturbation of the other sites so that they will be closer in the first component to the center site 
of the group by some arbitrary small $\beta>0$ causes the vertical ``rays'' of these sites  to disappear, but 
it preserves the horizontal rays (these rays are, actually, more or less very thin strips). 
\end{example}

\begin{example}\label{ex:LowerBound}
The positive lower bound expressed in \eqref{eq:eta} is necessary even in a square in the plane. Consider $X=[-10,10]^2$, the $\ell_1$ distance,  $P_{1,\beta}=\{(0,\beta)\}$ and $P_{2,\beta}=\{(0,-\beta)\}$, where $\beta\in [0,1]$. As long as $\beta>0$, the cell  $\dom(P_{1,\beta},P_{2,\beta})$ is the upper half of $X$. However, if   $\beta=0$, then $\dom(P_{1,0},P_{2,0})$ is $X$. This example is not specific to non-uniformly convex spaces and can be formulated 
in the Euclidean plane too, as actually done in \cite{ReemGeometricStabilityArxiv}. 
\end{example}
\begin{example}\label{ex:Compactness}
As for the compactness assumption, let $X$ be the (non-compact) plane with the $\ell_{\infty}$  norm and let 
 $P_1=\{(0,10n): n\in \N\}$, $P_2=\{(1/n,5+10n): n\in \N\}$. It can be easily verified that 
 Condition ~\ref{cond:ParallelLine} is satisfied. 
 In addition, $d(P_1,P_2)\geq 5>0$. However, given $\epsilon>0$, there can be no $\Delta>0$ such  that for 
 each $(P'_1,P'_2)$ of nonempty subsets, the inequalities $D(P_1,P_1')<\Delta$, $D(P_2,P'_2)<\Delta$ imply 
 $D(R_1,R'_1)<\epsilon$ and $D(R_2,R'_2)<\epsilon$. Indeed, assume for a contradiction that such $\Delta>0$ 
 exists. Let $n_0\in \N$ be large enough such that $2/n<\Delta$ for each $n\geq n_0,\,n\in\N$. 
 Let $P'_1=\{(0,10n): n<n_0,\,n\in\N\}\cup\{(2/n,10n): n\geq n_0, n\in \N\}$ and $P'_2=P_2$. 
 Then $D(P_1,P'_1)\leq 2/n_0<\Delta$ and $0=D(P_2,P'_2)<\Delta$. However, $D(R'_1,R_1)=\infty$ since the strip 
 $S=\{(x_1,x_2): x_1\leq -10,\, x_2\geq 10n_0\}$ is contained in $R_1=\dom(P_1,P_2)$ but its intersection with 
 $R'_1=\dom(P'_1,P'_2)$ is  empty. 
 
The example described above may seem somewhat complicated, but attempts to construct simpler ones may 
be futile:  for instance, the simple setting where  
$(X,d)=(\R^2,\ell_{\infty})$, $P_1=\{(0,1)\}$, $P'_1=\{(\beta,1\}$, $P_2=\{(0,-1)\}$,  $P'_2=P_2$, 
and  $\beta>0$ is arbitrarily small, imply that $D(R_1,R'_1)=\infty$ (because the negative 
part of the horizontal axis belongs to $R_1$ and not to $R_2$) and there is no stability. 
However, in this case  $P_1$ and $P_2$ are on a line segment parallel to a line segment contained in 
the unit sphere of the space. 
\end{example}

\section{Topological properties of the cells}\label{sec:TopologicalProperties}
This section discusses briefly a few topological properties 
of Voronoi cells under the assumption that Condition \ref{cond:ParallelLine} 
holds. These properties are closely related to the stability theorem (Theorem~\ref{thm:stabilityNUC}) 
and several common ingredients appear in the proofs of these assertions. However, at the moment 
it is not clear whether Theorem ~\ref{thm:TopologicalProperties} below implies 
Theorem ~\ref{thm:stabilityNUC}  or vice versa. 

In what follows $\partial (S)$, $\Int(S)$, and $\overline{S}$ denote the boundary, interior and 
closure of the set $S$, respectively. Note that the condition that the distance between 
each point in the space and the sites is attained implies that the sites are closed. However, 
this could be assumed in advance since a simple verification shows that $\dom(P,A)=\dom(\overline{P},\overline{A})$ 
holds in general. 
\begin{thm}\label{thm:TopologicalProperties}
Let $X$ be a convex subset of an arbitrary normed space.  
Let $P,A\subseteq X$ be nonempty. 
Suppose that $P\bigcap A=\emptyset$ and that the distance between 
each $x\in X$ and both $P$ and $A$ is attained. Suppose also that 
Condition \ref{cond:ParallelLine} holds (with respect to $P_1=P$ and $P_2=A$). 
Then 
\begin{equation}
\partial(\dom(P,A))=\{x\in X: d(x,P)=d(x,A)\},\label{eq:boundary}
\end{equation}
\begin{equation}
\Int(\dom(P,A))=\{x\in X: d(x,P)<d(x,A)\},\label{eq:interior} 
\end{equation}
\begin{equation}
\dom(P,A)=\overline{\{x\in X: d(x,P)<d(x,A)\}}.\label{eq:closure}
\end{equation}
\end{thm}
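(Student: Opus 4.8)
The plan is to reduce all three identities to the single continuous auxiliary function $f(x)=d(x,P)-d(x,A)$ on $X$ (continuous because $x\mapsto d(x,P)$ and $x\mapsto d(x,A)$ are $1$-Lipschitz), together with one geometric claim saying the bisector has empty interior. Writing $D=\dom(P,A)=\{x\in X: f(x)\le 0\}$, $U=\{x\in X: f(x)<0\}$ and $B=\{x\in X: f(x)=0\}$ (all relative to $X$), I note that $D$ is closed, $U$ is open, $D=U\cup B$ is a disjoint union, and two inclusions are immediate and use no hypothesis: $U\subseteq\Int(D)$ (an open subset of $D$), and, since $D$ is closed, $\partial(D)=D\setminus\Int(D)\subseteq D\setminus U=B$. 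Hence to get \eqref{eq:boundary} and \eqref{eq:interior} it suffices to establish the \emph{key claim} that $\Int(D)\cap B=\emptyset$: this forces $\Int(D)\subseteq U$, hence $\Int(D)=U$ (which is \eqref{eq:interior}), and then $\partial(D)=D\setminus\Int(D)=(U\cup B)\setminus U=B$ (which is \eqref{eq:boundary}).

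The heart of the matter, and the step I expect to be the main obstacle, is the key claim; this is exactly where Condition \ref{cond:ParallelLine} and Lemma \ref{lem:triangle} enter. I would argue by contradiction: suppose some $x\in B$ lies in $\Int(D)$. First observe $\rho:=d(x,P)=d(x,A)>0$, for if $\rho=0$ then, distances being attained, $x$ would lie in both $P$ and $A$, against $P\cap A=\emptyset$. Choose a nearest point $a\in A$ with $|x-a|=\rho$, set $u=(a-x)/\rho$, and push $x$ toward $a$ inside the convex set $X$ by $y_t=x+tu\in[x,a]\subseteq X$ for small $t\in(0,\rho)$. On one hand $d(y_t,A)\le|y_t-a|=\rho-t$; on the other, as $x\in\Int(D)$ we have $y_t\in D$ for $t$ small, whence $d(y_t,P)\le d(y_t,A)\le\rho-t$, while $1$-Lipschitzness gives $d(y_t,P)\ge d(x,P)-t=\rho-t$. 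So both distances equal $\rho-t$; in particular $a$ is a nearest point of $A$ to $y_t$, and (by attainment) there is a nearest point $p_t\in P$ to $y_t$ with $|y_t-p_t|=\rho-t$, which also satisfies $|x-p_t|\le|y_t-p_t|+t=\rho$, forcing $|x-p_t|=\rho$.

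Now I apply Lemma \ref{lem:triangle} to the vectors $v_1=y_t-p_t$ and $v_2=-tu$: here $v_1+v_2=x-p_t$, so $|v_1+v_2|=\rho=(\rho-t)+t=|v_1|+|v_2|$, and the lemma gives that the segment $[\widehat{y_t-p_t},\widehat{-tu}]=[\widehat{y_t-p_t},-u]$ is contained in the unit sphere. Since $y_t-a=-(\rho-t)u$, we have $-u=\widehat{y_t-a}$, so this is the segment $[\widehat{y_t-p_t},\widehat{y_t-a}]$, whose endpoint difference is $\widehat{y_t-a}-\widehat{y_t-p_t}=((y_t-a)-(y_t-p_t))/(\rho-t)=(p_t-a)/(\rho-t)$; thus it is parallel to $[p_t,a]$. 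It is nondegenerate, for its endpoints coincide only if $p_t=a$, impossible as $p_t\in P$, $a\in A$, $P\cap A=\emptyset$. Therefore $[p_t,a]$ joins a point of $P$ to a distinct point of $A$ yet is parallel to a nondegenerate segment lying in the unit sphere, contradicting Condition \ref{cond:ParallelLine}. This contradiction proves the key claim, and with it \eqref{eq:boundary} and \eqref{eq:interior}.

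Finally, \eqref{eq:closure} I would deduce by symmetry: the hypotheses $P\cap A=\emptyset$, attainment, and Condition \ref{cond:ParallelLine} are symmetric in $P$ and $A$, so the already-proved identity \eqref{eq:boundary} applied to $\dom(A,P)$ gives $\partial(\dom(A,P))=\{x\in X: d(x,A)=d(x,P)\}=B$. Since $X\setminus\dom(A,P)=\{x\in X: f(x)<0\}=U$, I get $B=\partial(\dom(A,P))\subseteq\overline{X\setminus\dom(A,P)}=\overline{U}$, and together with $U\subseteq\overline{U}$ this yields $D=U\cup B\subseteq\overline{U}$, while $\overline{U}\subseteq\overline{D}=D$ is clear; hence $D=\overline{U}$, which is \eqref{eq:closure}. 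The one genuinely non-formal ingredient is the passage in the key claim from an interior bisector point to a sphere segment parallel to a $P$--$A$ segment via the equality case of the triangle inequality; everything else is bookkeeping of the open/closed/boundary relations of the sublevel sets of $f$.
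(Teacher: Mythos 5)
Your proof is correct and follows essentially the same route as the paper: the geometric engine in both is the equality case of the triangle inequality (Lemma~\ref{lem:triangle}) applied to a point sliding along a segment toward a nearest site point, producing a nondegenerate unit-sphere segment parallel to a segment joining a point of $P$ to a point of $A$ and hence a contradiction with Condition~\ref{cond:ParallelLine} (the paper packages exactly this computation as Lemmas~\ref{lem:ParallelUnitSphere} and~\ref{lem:SegmentInequality}). The only organizational difference is that you deduce \eqref{eq:closure} from \eqref{eq:boundary} via the $P\leftrightarrow A$ symmetry, whereas the paper proves \eqref{eq:closure} first by a second, direct application of the segment lemma (sliding from the bisector point toward $P$ rather than toward $A$).
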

Property \eqref{eq:boundary} ensures that no ``fat'' bisectors can exist whenever 
Condition ~\ref{cond:ParallelLine} holds. Indeed, since in the context of Voronoi 
cells the bisector of the cell of $P=P_k$ is the set 
$B_k:=\{x\in X: d(x,P_k)=d(x,A_k)\}$ where $A_k=\bigcup_{j\neq k}P_j$, a fat bisector 
means that $\{x\in X: d(x,P)=d(x,A)\}$ contains a ball, a contradiction to \eqref{eq:boundary}. 

It seems that the properties mentioned in Theorem ~\ref{thm:TopologicalProperties} have been essentially 
known for a long time, at least in the 2D case with certain sites (e.g., points or convex polygons) 
or in the case of point sites in certain (finite) higher dimensional spaces. 
See also the discussion in Section ~\ref{sec:GeneralPosition} and also 
\cite{CorbalanMazonRecio1996,Horvath2000,HorvathMartini2013,
IKLM1995,Le1997,LevenSharir1987,Ma2000,MartiniSwanepoel2004} 
where related properties of bisectors (either topological or combinatorial) are 
discussed. However, it is not easy to find explicit formulations of \eqref{eq:boundary}-\eqref{eq:closure} 
in the form given here (in fact, no such formulations have been found) and to the best of the 
author knowledge, these properties have neither been considered in the generality 
mentioned here nor have been proved with the exception of one case: the recent 
established assertion \cite[Lemma 6]{ImaiKawamuraMatousekReemTokuyamaCGTA} which proves 
\eqref{eq:closure} under the assumption that the normed space is finite dimensional 
and strictly convex and the sites are closed and disjoint. 

Very recently \eqref{eq:boundary}-\eqref{eq:closure} have been discussed in additional settings: In 
a class of geodesic metric spaces which contains strictly convex spaces and Euclidean spheres 
\cite{ReemZoneCompute}, assuming the distance to the sites is attained, and in the class of arbitrary 
(possibly infinite dimensional)  
uniformly convex spaces \cite{ReemTopologicalPropertiesPreprint2013} where $d(P,A)>0$ but where 
the distance to the sites may not be attained. See \cite{ReemTopologicalPropertiesPreprint2013} for a 
more extensive discussion about the whole issue. 

\section{Concluding Remarks}\label{sec:Concluding}
The stability property established in Theorem \ref{thm:stabilityNUC} is quite general in the sense that the norm 
is not restricted to be uniformly convex and the sites can have a pretty general form (they only need to be 
compact). However, in some aspects this result is quite restricted comparing to the one presented in  \cite{ReemGeometricStabilityArxiv} since there infinitely many sites were allowed, the sites and the world $X$ were 
not assumed to  be compact or bounded (but a certain  boundedness condition was assumed), and the normed 
space could be even infinite dimensional. Moreover,  explicit (dimension free) estimates were given there in contrast 
with the non explicit ones given here. In particular, it is not clear  whether $\Delta$ given 
in Theorem~ \ref{thm:stabilityNUC} does not depend on the dimension. 
  However, we feel that by  strengthening our  approach it is possible to achieve the above properties in more general normed  spaces, but for this one has to find ways to obtain  explicit estimates not based on compactness or finiteness arguments. 
  In addition, we also feel that the  ``general position'' assumption on the sites can be weakened, with some caution 
  (see e.g., Section \ref{sec:GeneralPosition} for a related discussion). 
It will be interesting to establish results of this spirit. In particular, it will be interesting to weaken the finite face decomposition assumption imposed on the structure of the unit sphere and to see whether 
 a stability result  can be formulated for the normed spaces induced by the unit spheres  mentioned in Examples  \ref{ex:infinite}-\ref{ex:cylinder}.   
 
 It is also interesting to generalize the results to other settings, such 
 as manifolds, weighted distances, and convex distance functions (Minkowski functionals). 
 Finding theoretical and real-world applications 
 of the result (in addition to the possible ones described in Section \ref{sec:Intro}) may be of interest too; 
 a promising place where such a result can be applied is in the context of non-Euclidean stochastic geometry 
 and other kind of probabilistic questions related to geometry and the distribution of 
 the sites, as done in the Euclidean case (see, e.g., \cite{StoyanKendallMecke1987} 
 or \cite[pp. 39, 291-410]{OBSC} for a discussion 
 related to Poisson process). It will be interesting to study the relations between combinatorial 
 stability and geometric stability, e.g., whether there are examples in which the combinatorial structure 
 is stable (there is no combinatorial  difference before and after the perturbation)  but the 
 cells are not geometric stabile since the  Hausdorff distance between the original and perturbed cells 
 is not small enough.   Finally, it may also be interesting 
 to discuss the possibility of stability where there is no one-to-one correspondence between the original sites 
 and the   perturbed ones, e.g., because there were merges or eliminations due to some 
 processes. A corresponding situation occurs when considering point clouds and it may have applications 
 in data analysis and reconstruction, as can be seen in the somewhat related discussion in \cite{ChazalLieutier2005} 
 (in the context of Euclidean lambda-medial axis). 

%
%

\section{Proofs}\label{sec:ProofStability}

This section establishes the proof of Theorem ~\ref{thm:stabilityNUC}, Corollary ~\ref{cor:StabilityBisectors}, 
and Theorem ~\ref{thm:TopologicalProperties}. In the sequel, unless otherwise stated, $(\widetilde{X},|\cdot|)$ is  
$\R^m$, $2\leq m\in\N$, with some given norm $|\cdot|$; $X$ is a nonempty compact and convex subset of $\wt{X}$ having more than one point; $P,P',A$, and $A'$ are nonempty compact subsets of $X$;  $(P_k)_{k\in K}$ and $(P'_k)_{k\in K}$ are two finite tuples of nonempty compact subsets of $X$ representing the sites and the perturbed ones respectively; for each $k\in K$, we let $A_k=\bigcup_{j\neq k}P_j$ and $A'_k=\bigcup_{j\neq k}P'_j$; unit vectors will usually be denoted by $\theta$ or $\phi$. The unit sphere of the normed space is denoted by $S_{\wt{X}}$. The distance between a set (or a point) and the empty set is defined to be infinity. Note that when the sites are compacts, the distance between a point and a site is 
 attained. In general, we note that some of the claims and definitions given in the sequel hold in a more general setting (e.g., in some of them compactness or finite dimension are not needed; see  \cite{ReemISVD09,ReemGeometricStabilityArxiv}). 
 However, to avoid apparent complication the general case is not always explicitly considered. 
 The proof of Theorem ~\ref{thm:domIntervalApp}  can be found in  \cite{ReemISVD09,ReemGeometricStabilityArxiv}  
 and the proof of Proposition \ref{prop:stabilityVor} can be found in  \cite{ReemGeometricStabilityArxiv}. 
 
\begin{thm}{\bf(The representation theorem)}\label{thm:domIntervalApp}
 $\dom(P,A)$ is a union of line segments starting at the points of $P$. More precisely, given $p\in P$ and a unit vector $\theta$, let
\begin{equation}\label{eq:Tdef}
T(\theta,p)=\sup\{t\in [0,\infty): p+t\theta\in X\,\,\mathrm{and}\,\,
 d(p+t\theta,p)\leq d(p+t\theta,A)\}.
\end{equation}
Then
\begin{equation*}
\dom(P,A)=\bigcup_{p\in P}\bigcup_{|\theta|=1}[p,p+T(\theta,p)\theta].
\end{equation*}
\end{thm}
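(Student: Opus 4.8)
The plan is to analyze $\dom(P,A)$ one ray at a time: I would show that along each ray $t\mapsto p+t\theta$ emanating from a point $p\in P$ the dominance condition holds precisely on an initial closed interval, and then that every point of $\dom(P,A)$ sits on such a ray through one of its nearest points in $P$. The first move is to reduce the dominance condition along the ray to a one-dimensional inequality. Since $|\theta|=1$, for $t\ge 0$ one has $d(p+t\theta,p)=t$, and because $p\in P$ this already gives $d(p+t\theta,P)\le t$. Setting $h(t)=d(p+t\theta,A)$ and $\phi(t)=h(t)-t$, the point $p+t\theta$ satisfies $d(p+t\theta,P)\le d(p+t\theta,A)$ as soon as $\phi(t)\ge 0$; so the whole argument rests on understanding the sign set of $\phi$.

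The key observation is that $\phi$ is continuous and non-increasing. Continuity is immediate because $h$ is a distance function, hence $1$-Lipschitz. Monotonicity comes from the triangle inequality: for $0\le s<t$ and any $a\in A$,
\begin{equation*}
|p+t\theta-a|\le |p+s\theta-a|+|(t-s)\theta|=|p+s\theta-a|+(t-s),
\end{equation*}
and taking the infimum over $a\in A$ yields $h(t)-h(s)\le t-s$, i.e. $\phi(t)\le\phi(s)$. Since $X$ is compact and convex and contains $p$, the set $\{t\ge 0:p+t\theta\in X\}$ is a closed interval $[0,t_{\max}]$. As $\phi(0)=d(p,A)\ge 0$, the set $\{t\in[0,t_{\max}]:\phi(t)\ge 0\}$ is a closed initial interval $[0,T(\theta,p)]$ whose right endpoint is attained by continuity of $\phi$, and this endpoint is exactly the supremum defining $T(\theta,p)$ in \eqref{eq:Tdef}; in particular $T(\theta,p)\ge 0$.

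With this in hand both inclusions follow. For $\supseteq$, any $t\in[0,T(\theta,p)]$ gives $p+t\theta\in X$ and $d(p+t\theta,P)\le t\le h(t)=d(p+t\theta,A)$, so $p+t\theta\in\dom(P,A)$; thus each segment $[p,p+T(\theta,p)\theta]$ lies in $\dom(P,A)$. For $\subseteq$, take $x\in\dom(P,A)$. If $x\in P$ then $x=p+0\cdot\theta$ lies in every such segment through $p=x$; otherwise compactness of $P$ provides a nearest point $p\in P$ with $d(x,P)=|x-p|>0$. Setting $\theta=(x-p)/|x-p|$ and $t_0=|x-p|$, the dominance of $x$ gives $h(t_0)=d(x,A)\ge d(x,P)=t_0$, i.e. $\phi(t_0)\ge 0$, while $x\in X$ forces $t_0\le t_{\max}$; hence $t_0\le T(\theta,p)$ and $x\in[p,p+T(\theta,p)\theta]$.

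The step I expect to be the main (though modest) obstacle is the interval structure encoded in the monotonicity of $\phi$ together with the attainment of the supremum: one must be sure that the admissible parameter set is genuinely a closed interval containing $0$, so that the supremum in \eqref{eq:Tdef} is realized and the resulting segment is closed rather than merely half-open. The only other place where a hypothesis is genuinely used is the $\subseteq$ direction, where the existence of a nearest point in $P$ (hence compactness, or at least attainment of the distance to the sites) is what produces the correct direction $\theta$ and parameter $t_0$.
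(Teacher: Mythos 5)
Your proof is correct. The paper does not reproduce a proof of this theorem but defers it to the cited references \cite{ReemISVD09,ReemGeometricStabilityArxiv}, and the argument given there is essentially the same as yours: one reduces the dominance condition along each ray $t\mapsto p+t\theta$ to the sign of the $1$-Lipschitz, non-increasing function $t\mapsto d(p+t\theta,A)-t$, concludes that the admissible parameter set is a closed initial interval whose right endpoint realizes the supremum in \eqref{eq:Tdef}, and obtains the reverse inclusion by passing through a nearest point of $P$ (which exists by compactness).
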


\begin{prop}\label{prop:stabilityVor}
Suppose that $\inf\{d(P_k,P_j): j,k\in K,\,j\neq k\}>0$. Let $\epsilon>0$ be such that $\epsilon\leq\inf\{d(P_k,P_j): j,k\in K,\,j\neq k\}/6$. Suppose  that the following conditions hold:
\begin{multline}\label{eq:property_k}
\exists\,\lambda\in (0,\epsilon)\,\, \textnormal{such that for each}\,\, k\in K, p\in P_k,\,\,y\in \dom(P_k,A_k),\, \textnormal{and}\, x\in [p,y]\, \\
\textnormal{if}\,\,d(x,y)=\epsilon/2,\,\, \textnormal{then}\,\,d(x,p)\leq d(x,A_k)-\lambda.
\end{multline}
\begin{multline}\label{eq:property'_k}
\exists\,\lambda'\in (0,\epsilon)\,\, \textnormal{such that for each}\,\, k\in K, p'\in P'_k,\,y'\in \dom(P'_k,A'_k),\,\, \textnormal{and}\,\,x'\in [p',y'] \\
\textnormal{if}\,\, d(x',y')=\epsilon/2,\,\, \textnormal{then}\,\,d(x',p')\leq d(x',A'_k)-\lambda'.
\end{multline}
Suppose that there are  $M,M'\in (0,\infty)$ such that for all $k\in K$,
\begin{equation}\label{eq:Mk}
\sup\{T_k(\theta,p): p\in P, |\theta|=1\}\leq M,\quad\sup\{T'_k(\theta',p'): p'\in P',|\theta'|=1\}\leq M',
\end{equation}
where $T_k(\theta,p)$ and $T'_k(\theta',p')$ are defined as in \eqref{eq:Tdef}, but with $A_k$ and $A'_k$ instead of $A$. 

Let  $\epsilon_4$ be a positive number satisfying
\begin{equation}\label{eq:epsilon24M}
4(1+M/\epsilon)\epsilon_4<\lambda/2, \quad 4(1+M'/\epsilon)\epsilon_4<\lambda'/2.
\end{equation}
Let
\begin{equation*}\label{eq:R_k}
R_k=\dom(P_k,A_k), \quad
\quad R'_k=\dom(P'_k,A'_k).
\end{equation*}
If
\begin{equation}\label{eq:epsilon24k}
  \quad D(P_k,P'_k)< \epsilon_4 \,\,\,\,\forall k\in K,
\end{equation}
 then $D(R_k,R'_k)<\epsilon$ for each $k\in K$.
\end{prop}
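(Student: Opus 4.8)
The plan is to bound the two one-sided suprema in the Hausdorff distance $D(R_k,R'_k)$ separately. Since the hypotheses \eqref{eq:property_k}--\eqref{eq:epsilon24k} are symmetric under interchanging the primed and unprimed objects (with $\lambda,M$ replaced by $\lambda',M'$), it suffices to show that every $x\in R_k$ lies within distance $\epsilon$ of $R'_k$; the reverse bound $\sup_{x'\in R'_k}d(x',R_k)<\epsilon$ then follows verbatim with the roles reversed. Note that the finite face decomposition, Condition~\ref{cond:ParallelLine}, and the triangle-equality lemma are not needed here: they serve only to \emph{produce} the margin hypotheses \eqref{eq:property_k}--\eqref{eq:property'_k}, which this proposition takes as given. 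Throughout I would invoke the representation theorem (Theorem~\ref{thm:domIntervalApp}) to write any $x\in R_k=\dom(P_k,A_k)$ as $x=p+s\theta$ with $p\in P_k$, $\theta$ a unit vector, and $0\le s\le T_k(\theta,p)$, so that $d(x,p)=s$. I would also record the elementary union estimate $D(A_k,A'_k)\le\max_{j\neq k}D(P_j,P'_j)<\epsilon_4$ and the transport inequality $|d(z,A'_k)-d(w,A_k)|\le|z-w|+D(A_k,A'_k)$, which let me carry distance-to-$A_k$ information over to the perturbed configuration.

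Given $x=p+s\theta\in R_k$, I would first choose $p'\in P'_k$ with $|p-p'|<\epsilon_4$, available since $D(P_k,P'_k)<\epsilon_4$. The construction splits on how far $x$ is from its base. If $s<\epsilon/2$, I would map $x$ to $p'\in P'_k\subseteq R'_k$; then $|x-p'|\le s+|p-p'|<\epsilon/2+\epsilon_4<\epsilon$, settling this case. If $s\ge\epsilon/2$, the idea is to retreat a distance $\epsilon/2$ toward the base while swapping $p$ for $p'$, i.e.\ to map $x$ to $x'=p'+(s-\epsilon/2)\theta$. Retreating exactly $\epsilon/2$ is what unlocks the margin: applying \eqref{eq:property_k} with $y=x$ itself (legitimate, as $y$ ranges over all of $\dom(P_k,A_k)$) and with the point $\tilde{x}=p+(s-\epsilon/2)\theta$ on $[p,x]$ at distance $\epsilon/2$ from $x$ gives the quantitative interior estimate $d(\tilde{x},p)\le d(\tilde{x},A_k)-\lambda$.

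Transporting this, I would estimate $d(x',A'_k)\ge d(\tilde{x},A_k)-|x'-\tilde{x}|-D(A_k,A'_k)>d(\tilde{x},A_k)-2\epsilon_4$, using $|x'-\tilde{x}|=|p-p'|<\epsilon_4$. Hence $d(x',p')=s-\epsilon/2=d(\tilde{x},p)\le d(\tilde{x},A_k)-\lambda<d(x',A'_k)+2\epsilon_4-\lambda$. Since \eqref{eq:epsilon24M} forces $2\epsilon_4<\lambda$ (as $M>0$ gives $4\epsilon_4<\lambda/2$), this yields $d(x',P'_k)\le d(x',p')<d(x',A'_k)$, so $x'$ satisfies the dominance inequality defining $R'_k$; and $|x-x'|\le|p-p'|+\epsilon/2<\epsilon_4+\epsilon/2<\epsilon$. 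Thus, modulo one point discussed below, every $x\in R_k$ acquires a partner in $R'_k$ within $\epsilon$.

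The main obstacle, and the reason the uniform bounds $M,M'$ and the sharpened choice of $\epsilon_4$ in \eqref{eq:epsilon24M} are present, is that the dominance inequality alone does not place $x'$ in $R'_k$: one must also guarantee $x'\in X$, equivalently (via Theorem~\ref{thm:domIntervalApp}) that the perturbed ray length satisfies $T'_k(\theta,p')\ge s-\epsilon/2$. The candidate $x'=p'+(s-\epsilon/2)\theta$ is only known to lie within $\epsilon_4$ of the genuine point $\tilde{x}\in X$, and when $\tilde{x}$ sits near $\partial X$, or the ray is nearly tangent to $\partial X$, shifting the base from $p$ to $p'$ can shorten the admissible ray by far more than $\epsilon_4$. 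Controlling this amplification is exactly where the bound $s\le T_k(\theta,p)\le M$ enters: the discrepancy between $T_k(\theta,p)$ and $T'_k(\theta,p')$ scales like the base displacement times the ratio $M/\epsilon$ of ray length to the working scale $\epsilon$, which is the source of the factor $(1+M/\epsilon)$ in \eqref{eq:epsilon24M}. I expect the bulk of the technical labor to be this comparison of $T_k$ and $T'_k$ — clipping the candidate to the perturbed ray, bounding the resulting loss, and checking it stays below $\epsilon$ — after which the clean margin argument above closes the estimate.
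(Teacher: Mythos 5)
The paper itself does not prove Proposition~\ref{prop:stabilityVor}: it imports it wholesale from \cite{ReemGeometricStabilityArxiv}, so there is no in-paper argument to measure you against. Judged on its own terms, your reduction is sound in all but one place: the symmetry of the hypotheses, the union bound $D(A_k,A'_k)<\epsilon_4$, the case split on $d(x,p)<\epsilon/2$, the application of \eqref{eq:property_k} at the point $\tilde x$ at distance $\epsilon/2$ from $y=x$, and the arithmetic ($2\epsilon_4<\lambda$, $|x-x'|<\epsilon_4+\epsilon/2<\epsilon$) all work --- \emph{provided the candidate point lies in $X$}. That is the genuine gap, and the repair you sketch for it would not succeed. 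Your candidate $x'=p'+(s-\epsilon/2)\theta=\tilde x+(p'-p)$ is a translate of $\tilde x$ by the vector $p'-p$, and translating a point of a convex body by an arbitrarily small vector joining two of its points can push it outside the body; worse, the length of the translated segment $[p',x']$ that must be clipped off is \emph{not} of order $\epsilon_4\cdot(M/\epsilon)$ as you anticipate. Take $X$ to be the thin triangle with vertices $(0,0)$, $(1,0)$, $(0,\delta)$, let $p=(0,0)$, $\theta=(1,0)$, $p'=(0,0.9\delta)$: then $|p-p'|=0.9\delta$ is arbitrarily small, yet the ray from $p'$ in direction $\theta$ leaves $X$ at parameter $0.1$, so the clipping loss remains of order $1$ however small $\epsilon_4$ is. The discrepancy between $T_k(\theta,p)$ and $T'_k(\theta,p')$ is governed by the local geometry of $\partial X$, not by $M/\epsilon$, so ``clip and bound the loss'' cannot be closed from the stated hypotheses.

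The fix is to stop translating and let the convexity of $X$ do the work: keep the original endpoint $y=x\in R_k\subseteq X$ and take $x'$ to be the point of the segment $[p',y]$ at distance $\epsilon/2$ from $y$ (or $x'=p'$ when $|y-p'|\le\epsilon/2$, in which case $d(y,p')\le\epsilon/2<\epsilon$ and $p'\in R'_k$ trivially). Since $p',y\in X$, this $x'$ lies in $X$ automatically, and the dominance estimate survives with slightly worse constants: $d(x',p')=|y-p'|-\epsilon/2\le s+\epsilon_4-\epsilon/2=d(\tilde x,p)+\epsilon_4$, while $|x'-\tilde x|=(\epsilon/2)\left|\frac{p'-y}{|p'-y|}-\frac{p-y}{|p-y|}\right|\le(\epsilon/2)\cdot\frac{2|p-p'|}{|p-y|}\le 2\epsilon_4$ because $|p-y|=s\ge\epsilon/2$. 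Combining with $d(x',A'_k)\ge d(\tilde x,A_k)-D(A_k,A'_k)-|x'-\tilde x|>d(\tilde x,A_k)-3\epsilon_4$ and \eqref{eq:property_k} gives $d(x',p')\le d(x',A'_k)+4\epsilon_4-\lambda<d(x',A'_k)$, which is exactly the slack that $4(1+M/\epsilon)\epsilon_4<\lambda/2$ guarantees; and $d(y,x')=\epsilon/2<\epsilon$. With this one substitution your argument is complete, and notably it never needs the comparison of $T_k$ with $T'_k$ that you flagged as the bulk of the labor.
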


\begin{defin}\label{def:[P,A]}
Given two nonempty subsets $P$ and $A$ of $X$, the notation $[P,A]$ is 
the set of all non-degenerate line segments of the form $[p,a]$, where
$p\in P$, $a\in A$. The notation $\widehat{[P,A]}$ is for the set of all unit 
vectors generated  by segments from $[P,A]$. The notation $\widehat{P,A}$ means 
 the set of all unit vectors generated  by endpoints of segments from $[P,A]$. 
In other words,
\begin{equation}\label{eq:[P,A]}
\widehat{[P,A]}=\left\{\frac{a-p}{|a-p|}: [p,a]\in [P,A]\right\}. 
\end{equation}
\begin{equation}\label{eq:|P,A|}
\widehat{P,A}=\widehat{[P,A]}\cup\widehat{[A,P]}. 
\end{equation}
The notation $\wh{A}$ is the set of unit vectors generated from nondegenerate line segments contained 
in $A$, namely 
\begin{equation}\label{eq:UnitVectorLineSegment}
\wh{A}=\left\{\frac{a'-a}{|a'-a|}: [a,a']\subseteq A, a\neq a'\right\}. 
\end{equation}
\end{defin}
The most important case for \eqref{eq:UnitVectorLineSegment}  is when $A=S_{\wt{X}}$, i.e., 
when $A$ is the unit sphere. If for instance the normed space is 
$(\R^2,\ell_{\infty})$, then $\wh{S_{\wt{X}}}=\{(0,1), (0,-1),(1,0), (-1,0)\}$. 
 In this case $\wh{S_{\wt{X}}}$ is compact, but in general it may not be, and this is the case 
 of $\wh{S_{\wt{X}}}$ from Example \ref{ex:UnitSphereStrange3D}. If the unit sphere does not 
 contain any line segment (i.e., the normed space is strictly convex), then $\wh{S_{\wt{X}}}=\emptyset$. 
%
%

The following simple lemma (whose proof is given for the sake of completeness) 
establishes  simple properties of $\widehat{P,A}$.
\begin{lem}\label{lem:wh[P,A]}
\begin{enumerate}[(I)]
\item\label{item:parallel} Given $x,y\in X$, $x\neq y$, the segment $[x,y]$ is parallel to some segment contained in $[P,A]$ if and only if $(y-x)/|y-x|\in \wh{P,A}$, i.e., if and only if $\pm(y-x)/|y-x|\in \wh{[P,A]}$. 
\item\label{item:HSwh[P,A]} Let $H=\{t(a-p): a\in A, p\in P, t\in \R\}$. 
Then $\widehat{P,A}=H\cap S_{\wt{X}}$. 
\end{enumerate}
\end{lem}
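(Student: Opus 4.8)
The plan is to deduce both statements essentially by unwinding the definitions in Definition~\ref{def:[P,A]}, the only real ingredient being the elementary observation that $\wh{[A,P]}=-\wh{[P,A]}$. Indeed, for $a\in A$ and $p\in P$ with $a\neq p$ one has $(p-a)/|p-a|=-(a-p)/|a-p|$, so the set of unit vectors arising from $[A,P]$ is exactly the set of negatives of those arising from $[P,A]$. Consequently $\wh{P,A}=\wh{[P,A]}\cup\wh{[A,P]}=\wh{[P,A]}\cup(-\wh{[P,A]})$ is symmetric under $\theta\mapsto-\theta$, and $\theta\in\wh{P,A}$ holds precisely when $\theta\in\wh{[P,A]}$ or $-\theta\in\wh{[P,A]}$, i.e.\ when $\pm\theta\in\wh{[P,A]}$. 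This reformulation already yields the second ``i.e.'' appearing in assertion~(I).

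For assertion~(I) I would invoke the notion of parallel segments recalled in Section~\ref{sec:Definitions}: the non-degenerate segments $[x,y]$ and $[p,a]$ are parallel exactly when $(y-x)/|y-x|=\pm(a-p)/|a-p|$. Writing $u=(y-x)/|y-x|$, the segment $[x,y]$ is parallel to some segment $[p,a]\in[P,A]$ if and only if there is $v\in\wh{[P,A]}$ with $u=\pm v$, which by the symmetry noted above is equivalent to $u\in\wh{[P,A]}\cup(-\wh{[P,A]})=\wh{P,A}$. This is a pure matter of matching the two ``$\pm$'' alternatives against the two sets $\wh{[P,A]}$ and $\wh{[A,P]}$, so nothing beyond this bookkeeping is required.

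For assertion~(II) I would prove the two inclusions separately. The inclusion $\wh{P,A}\subseteq H\cap S_{\wt{X}}$ is immediate: every element of $\wh{P,A}$ is a unit vector by construction, and $(a-p)/|a-p|=t(a-p)$ with $t=1/|a-p|\in\R$ while $(p-a)/|p-a|=t(a-p)$ with $t=-1/|a-p|\in\R$, so both lie in $H$. For the reverse inclusion, take $\theta\in H\cap S_{\wt{X}}$ and write $\theta=t(a-p)$ with $a\in A$, $p\in P$, $t\in\R$. Since $|\theta|=1\neq 0$ we must have $t\neq 0$ and $a\neq p$, whence $|t|=1/|a-p|$; the sign of $t$ then decides whether $\theta=(a-p)/|a-p|\in\wh{[P,A]}$ (if $t>0$) or $\theta=(p-a)/|p-a|\in\wh{[A,P]}$ (if $t<0$), so in either case $\theta\in\wh{P,A}$, establishing equality.

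Since everything reduces to reading off the definitions, I expect no genuine obstacle here; the one place that demands a little care is keeping track of the sign of the scalar and of the two orientations $\pm\theta$, so that the alternatives produced by the definition of ``parallel'' and by the sign of $t$ in $H$ are matched correctly against the union $\wh{[P,A]}\cup\wh{[A,P]}$ that defines $\wh{P,A}$.
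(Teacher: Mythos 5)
Your proposal is correct and follows essentially the same route as the paper's proof: both parts are obtained by unwinding Definition~\ref{def:[P,A]} and the definition of parallel segments, with the observation $\wh{[A,P]}=-\wh{[P,A]}$ and the sign of the scalar $t$ handling the $\pm$ bookkeeping. No gaps.
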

\begin{proof}
Part \eqref{item:parallel} follows directly from the fact that the non-degenerate segments 
$[x,y]$ and $[p,a]$ are parallel if and only if $(y-x)/|y-x|=(a-p)/|a-p|$ or $-(y-x)/|y-x|=(a-p)/|a-p|$. 
Part \eqref{item:HSwh[P,A]} is simple too: if $\theta=\pm(a-p)/|a-p|\in \wh{P,A}$, then $|\theta|=1$ and  $\theta=t(a-p)$ 
for $t=\pm 1/|a-p|$. Hence $\theta\in H\cap S_{\wt{X}}$. On the other hand, if 
$\theta\in H\cap S_{\wt{X}}$, then $\theta=t(a-p)$ for $a\in A, \,p\in P$, and $t\in \R$. Since $|\theta|=1$ we have $a\neq p$ 
and $|t|=1/|a-p|$. Hence either  $t=1/|a-p|$ or  $t=-1/|a-p|$ and thus $\theta\in \wh{P,A}$. 
\end{proof}

The following lemma establishes a simple estimate on the Hausdorff distance of unions of sets in 
terms of the Hausdorff distances of members in the unions. 
\begin{lem}\label{lem:HausdorffUnion}
Let $(G_i)_{i\in I}$ and $(G'_i)_{i\in I}$ be two tuples of subsets of a metric space, 
where $I$ is any nonempty index set. Then 
\begin{equation*}
D(\cup_{i\in I}G_i,\cup_{i\in I}G'_i)\leq \sup\{D(G_i,G'_i): i\in I\}.
\end{equation*}
\end{lem}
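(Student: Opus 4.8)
The plan is to unwind Definition \ref{def:Hausdorff} directly and reduce the claim to two symmetric one-sided estimates. Write $G=\cup_{i\in I}G_i$, $G'=\cup_{i\in I}G'_i$, and set $s=\sup\{D(G_i,G'_i):i\in I\}$. If $s=\infty$ there is nothing to prove, so I may assume $s<\infty$. Since $D(G,G')$ is by definition the maximum of the two suprema $\sup_{g\in G}d(g,G')$ and $\sup_{g'\in G'}d(g',G)$, it suffices to bound each of them by $s$; and by the symmetry of the setup (interchanging the roles of the two tuples) it is enough to treat the first one.

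To bound $\sup_{g\in G}d(g,G')$, I would fix an arbitrary $g\in G$ and choose an index $i\in I$ with $g\in G_i$. The crucial observation is the inclusion $G'_i\subseteq G'$, which gives $d(g,G')=\inf_{g'\in G'}d(g,g')\leq \inf_{g'\in G'_i}d(g,g')=d(g,G'_i)$, because the infimum taken over the larger target set $G'$ can only be smaller. Next, $d(g,G'_i)\leq \sup_{x\in G_i}d(x,G'_i)$, and by Definition \ref{def:Hausdorff} this last supremum is one of the two quantities whose maximum equals $D(G_i,G'_i)$; hence $d(g,G'_i)\leq D(G_i,G'_i)\leq s$. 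As $g\in G$ was arbitrary, this yields $\sup_{g\in G}d(g,G')\leq s$.

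Running the identical argument with the roles of $G_i$ and $G'_i$ reversed gives $\sup_{g'\in G'}d(g',G)\leq s$, and combining the two estimates produces $D(G,G')\leq s$, which is the assertion. I do not expect a genuine obstacle here: the only point requiring care is the direction of monotonicity of the distance-to-a-set functional, namely that $d(\,\cdot\,,B)$ decreases as the target set $B$ is enlarged, which is precisely what lets one pass from the single pair $(G_i,G'_i)$ to the unions. No compactness or finiteness is used, and the index set $I$ may be arbitrary.
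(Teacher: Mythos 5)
Your argument is correct and is essentially identical to the paper's own proof: both fix a point in the union, pass to the index of a set containing it, use the monotonicity $d(g,G')\leq d(g,G'_i)$ coming from $G'_i\subseteq G'$, bound by $D(G_i,G'_i)\leq s$, and conclude by symmetry. No differences worth noting.
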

\begin{proof}
Let $y\in \cup_{i\in I}G_i$. Then $y\in G_j$ for some $j\in I$. 
Hence 
\begin{multline*}
d(y,\cup_{i\in I}G'_i)\leq d(y, G'_j)\leq \sup\{d(x,G'_j): x\in G_j\}\leq D(G_j,G'_j) 
\leq  \sup\{D(G_i,G'_i): i\in I\}
\end{multline*}
by the definition of the Hausdorff distance $D$. 
Thus $\sup\{d(x, \cup_{i\in I}G'_i): x\in \cup_{i\in I}G_i\}\leq \sup\{D(G_i,G'_i): i\in I\}$. 
In the same way $\sup\{d(x', \cup_{i\in I}G_i): x'\in \cup_{i\in I}G'_i\}\leq \sup\{D(G_i,G'_i): i\in I\}$.  
Therefore $D(\cup_{i\in I}G_i,\cup_{i\in I}G'_i)\leq  \sup\{D(G_i,G'_i): i\in I\}$, again 
by the definition of the Hausdorff distance. 
\end{proof}

The next lemma establishes estimates on $|d(P',A')-d(P,A)|$ and $D(\wh{P,A},\wh{P',A'})$ 
whenever there are known estimates on $D(P,P')$ and $D(A,A')$. 
\begin{lem}\label{lem:whPAP'A'}
Suppose  that $D(P,P')<\epsilon_1$ and $D(A,A')<\epsilon_2$ for some positive 
numbers $\epsilon_1,\epsilon_2$. Then 
\begin{enumerate}[(I)] 
\item\label{item:d(P,A)eps1eps2}  $|d(P',A')-d(P,A)|\leq \epsilon_1+\epsilon_2$
\item\label{item:wh[P,A]} 
If $\epsilon_1+\epsilon_2<d(P,A)$ then 
\begin{equation}\label{eq:D[P,A][P',A']}
D(\wh{[P,A]},\wh{[P',A']})\leq\frac{2(\epsilon_1+\epsilon_2)}{d(P,A)} 
\end{equation}
and 
\begin{equation}\label{eq:D|P,A||P',A'|}
D(\wh{P,A},\wh{P',A'})\leq\frac{2(\epsilon_1+\epsilon_2)}{d(P,A)}.
\end{equation}

\end{enumerate}
\end{lem}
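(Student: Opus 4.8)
The plan is to treat the two assertions separately, with part (II) carrying essentially all the work; the unifying tool is a single elementary estimate on the normalization map together with the characterization of the Hausdorff distance recorded in Section~\ref{sec:Definitions}.

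For part (I), I would show that the set distance $d(\cdot,\cdot)$ is nonexpansive in each argument with respect to the Hausdorff distance. First, using compactness, pick $p_0\in P$ and $a_0\in A$ attaining $d(P,A)=|p_0-a_0|$. Since $D(P,P')<\epsilon_1$ and $D(A,A')<\epsilon_2$, there exist $p'\in P'$ and $a'\in A'$ with $|p_0-p'|<\epsilon_1$ and $|a_0-a'|<\epsilon_2$, so the triangle inequality gives $d(P',A')\le|p'-a'|<d(P,A)+\epsilon_1+\epsilon_2$. Exchanging the roles of the primed and unprimed sets yields the reverse inequality, and the two together give $|d(P',A')-d(P,A)|\le\epsilon_1+\epsilon_2$.

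The heart of the lemma is part (II), and the key analytic ingredient is the estimate that for any two nonzero vectors $u,v$ of the space,
\begin{equation*}
\left|\frac{u}{|u|}-\frac{v}{|v|}\right|\le\frac{2|u-v|}{|u|}.
\end{equation*}
I would prove it by writing $\tfrac{u}{|u|}-\tfrac{v}{|v|}=\tfrac{u-v}{|u|}+v\bigl(\tfrac{1}{|u|}-\tfrac{1}{|v|}\bigr)$ and bounding the second summand by $\tfrac{\bigl||v|-|u|\bigr|}{|u|}\le\tfrac{|u-v|}{|u|}$ via the reverse triangle inequality; this uses only the triangle inequality and absolute homogeneity, so \emph{no convexity hypothesis on the sphere is needed}. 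With this in hand I establish \eqref{eq:D[P,A][P',A']} by a matching argument in both Hausdorff directions. Given $\theta=(a-p)/|a-p|\in\wh{[P,A]}$ with $p\in P$, $a\in A$, choose $p'\in P'$, $a'\in A'$ with $|p-p'|<\epsilon_1$, $|a-a'|<\epsilon_2$; setting $u=a-p$, $v=a'-p'$ we have $|u-v|<\epsilon_1+\epsilon_2$. Since $p\in P$ and $a\in A$, one has $|u|\ge d(P,A)$, and the hypothesis $\epsilon_1+\epsilon_2<d(P,A)$ forces $|v|\ge|u|-|u-v|>0$, so $a'\ne p'$ and $\theta':=(a'-p')/|a'-p'|\in\wh{[P',A']}$ is well defined; the displayed estimate with denominator $|u|\ge d(P,A)$ gives $|\theta-\theta'|<2(\epsilon_1+\epsilon_2)/d(P,A)$. \textbf{The step I expect to be the main subtlety} is the reverse direction: starting from $\theta'\in\wh{[P',A']}$ and approximating by some $\theta\in\wh{[P,A]}$, one is tempted to end with $d(P',A')$ in the denominator, which by part (I) is only $\ge d(P,A)-(\epsilon_1+\epsilon_2)$ and would give a weaker bound. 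The point is that here too the constructed \emph{unprimed} vector $u=a-p$ (with $p\in P$, $a\in A$) can be placed in the denominator since $|a-p|\ge d(P,A)$; choosing the form $2|u-v|/|u|$ rather than $2|u-v|/|v|$ keeps $d(P,A)$ in the denominator in both directions, and \eqref{eq:D[P,A][P',A']} follows.

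Finally, to pass from \eqref{eq:D[P,A][P',A']} to \eqref{eq:D|P,A||P',A'|} I would use the decomposition $\wh{P,A}=\wh{[P,A]}\cup\wh{[A,P]}$ from \eqref{eq:|P,A|}. Running the previous argument with the roles of $P$ and $A$ interchanged, and noting $d(A,P)=d(P,A)$, bounds $D(\wh{[A,P]},\wh{[A',P']})$ by the same quantity $2(\epsilon_1+\epsilon_2)/d(P,A)$. Lemma~\ref{lem:HausdorffUnion}, applied to the two-element index family $\{\wh{[P,A]},\wh{[A,P]}\}$ and $\{\wh{[P',A']},\wh{[A',P']}\}$, then yields
\begin{equation*}
D(\wh{P,A},\wh{P',A'})\le\max\bigl\{D(\wh{[P,A]},\wh{[P',A']}),\,D(\wh{[A,P]},\wh{[A',P']})\bigr\}\le\frac{2(\epsilon_1+\epsilon_2)}{d(P,A)},
\end{equation*}
which completes the proof.
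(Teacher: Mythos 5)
Your proposal is correct and follows essentially the same route as the paper: the same normalized-difference estimate $2|u-v|/|u|$ with the unprimed vector's length $|u|\ge d(P,A)$ kept in the denominator in both Hausdorff directions, followed by the decomposition $\wh{P,A}=\wh{[P,A]}\cup\wh{[A,P]}$ and Lemma~\ref{lem:HausdorffUnion}. The only cosmetic differences are that your part (I) starts from an attaining pair (the paper argues from arbitrary primed points, avoiding compactness) and your algebraic split of $\tfrac{u}{|u|}-\tfrac{v}{|v|}$ is organized slightly differently, but it yields the identical bound.
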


\begin{proof}
We first prove part \eqref{item:d(P,A)eps1eps2}. Let $p'\in P'$ and $a'\in A'$ be given. 
Since  $D(P,P')<\epsilon_1$ and $D(A,A')<\epsilon_2$ there are $p\in P$ and $a\in A$ 
such that $d(p,p')<\epsilon_1$ and $d(a',a)<\epsilon_2$, so
\begin{equation*}
d(P,A)\leq d(p,a)\leq d(p,p')+d(p',a')+d(a',a)< \epsilon_1+\epsilon_2+d(p',a'),
\end{equation*}
Since $p'$ and $a'$ were arbitrary it follows that $d(P,A)-\epsilon_1-\epsilon_2\leq d(P',A')$. 
In the same way  $d(P',A')-\epsilon_1-\epsilon_2\leq d(P,A)$.

We now pass to part \eqref{item:wh[P,A]}. It suffices to prove \eqref{eq:D[P,A][P',A']} since in the same 
way an analogous inequality with $\wh{[A,P]},\wh{[A',P']}$ is proved, and using Lemma \ref{lem:HausdorffUnion} 
and $\wh{P,A}=\wh{[P,A]}\cup \wh{[A,P]}$ we obtain \eqref{eq:D|P,A||P',A'|}. 

Since $\epsilon_1+\epsilon_2<d(P,A)$ it follows that $|a-p|>0$ and,  
from Part \eqref{item:d(P,A)eps1eps2}, that $|a'-p'|>0$ whenever $p\in P$, $a\in A$, $p'\in P'$, $a'\in A'$ and 
$d(p,p')<\epsilon_1$, $d(a,a')<\epsilon_2$. Now suppose that $[p,a]\in [P,A]$ is given. 
Since $D(P,P')<\epsilon_1$ there exists 
$p'\in P'$ such that $d(p,p')<\epsilon_1$. Since $D(A,A')<\epsilon_2$ there exists 
$a'\in A'$ such that $d(a,a')<\epsilon_2$. Hence $p'=p+h_1\phi_1$ and $a'=a+h_2\phi_2$ where $h_i\in (0,\epsilon_i)$ and $|\phi_i|=1$, $i=1,2$ (e.g., $\phi_1=(p'-p)/|p'-p|$).  Let $h=h_2\phi_2-h_1\phi_1$ and $u=a-p$. Then $a'-p'=u+h$ and 
\begin{multline}\label{eq:eps1eps2}
\left|\frac{a'-p'}{|a'-p'|}-\frac{a-p}{|a-p|} \right|=\left|\frac{|a-p|(a'-p')-|a'-p'|(a-p)}{|a'-p'||a-p|}\right|\\
=\left|\frac{|u|(u+h)-u|u+h|}{|a'-p'||a-p|}\right| 
= \frac{\left|(u+h)(|u|-|u+h|)+|u+h|(u+h-u)\right|}{|u+h||a-p|}\\
\leq \frac{2|u+h||h|}{|u+h|d(P,A)}
< \frac{2 (\epsilon_1+\epsilon_2)}{d(P,A)}
\end{multline}
where the triangle inequality was used a few times for establishing inequality. 
In the same way as in \eqref{eq:eps1eps2} for all $[p',a']\in [P',A']$ there exists $[p,a]\in [P,A]$ such that 
\begin{equation*}
\left|\frac{a'-p'}{|a'-p'|}-\frac{a-p}{|a-p|}\right|<\frac{2(\epsilon_1+\epsilon_2)}{d(P,A)}. 
\end{equation*}
Therefore we obtain \eqref{eq:D[P,A][P',A']}.
\end{proof}

The following lemma is needed for the proof of Lemma \ref{lem:SegmentInequality} and Lemma \ref{lem:lambda_uni}. 
Now the characterization of equality in the triangle inequality (Lemma ~\ref{lem:triangle}) is used. 
\begin{lem}\label{lem:ParallelUnitSphere}
Let $p$, $y$, and $a\neq p$ be points in an arbitrary normed space satisfying 
$0<d(y,p)\leq d(y,a)$. If $x\in (p,y)$ satisfies $d(x,p)=d(x,a)$, 
then the segment $[(x-a)/|x-a|,(x-p)/|x-p|]$ is nondegenerate and it 
is contained in the unit sphere of the space. In addition, $[p,a]$ is parallel to this 
segment. 
\end{lem}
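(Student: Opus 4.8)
The plan is to manufacture an equality in the triangle inequality and then feed it into Lemma \ref{lem:triangle}, which is precisely the tool that converts such an equality into the assertion that a segment of normalized vectors lies on the unit sphere. Concretely, I would apply that characterization to the decomposition $y-a=(y-x)+(x-a)$, so the target segment arises as $[\wh{x_1},\wh{x_2}]$ with $x_1=y-x$ and $x_2=x-a$. The whole proof thus reduces to showing $|y-a|=|y-x|+|x-a|$, i.e. $d(y,a)=d(y,x)+d(x,a)$, plus some elementary bookkeeping to match the normalized vectors to those in the statement.

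First I would record what collinearity gives. Since $x\in(p,y)$, writing $x=p+t(y-p)$ with $t\in(0,1)$ shows that $x-p=t(y-p)$ and $y-x=(1-t)(y-p)$ are both positive multiples of $y-p$. Two consequences follow: the distances add along the segment, $d(y,p)=d(y,x)+d(x,p)$, and the normalized vectors agree, $(y-x)/|y-x|=(x-p)/|x-p|$. The second identity is what will let me rewrite $\wh{x_1}$ as the quantity $(x-p)/|x-p|$ appearing in the lemma.

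The heart of the argument, and the step I expect to be the only subtle one, is the squeeze producing the equality. Using in turn the hypothesis $d(y,p)\leq d(y,a)$, then the collinearity identity $d(y,p)=d(y,x)+d(x,p)$, and finally the hypothesis $d(x,p)=d(x,a)$, I obtain $d(y,a)\geq d(y,x)+d(x,a)$; combined with the ordinary triangle inequality $d(y,a)\leq d(y,x)+d(x,a)$ this forces equality. Both $y-x$ and $x-a$ are nonzero (the former because $x\neq y$, the latter because $|x-a|=d(x,a)=d(x,p)=|x-p|>0$ as $x\neq p$), so Lemma \ref{lem:triangle} applies and places the segment $[(y-x)/|y-x|,(x-a)/|x-a|]$ in $S_{\wt{X}}$. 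Rewriting the first endpoint as $(x-p)/|x-p|$ yields exactly the segment in the statement.

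It then remains to check nondegeneracy and parallelism, which are routine once I set $\alpha:=|x-p|=|x-a|>0$. The two endpoints are $(x-p)/\alpha$ and $(x-a)/\alpha$, and their difference is $(a-p)/\alpha$, which is nonzero since $a\neq p$; hence the endpoints are distinct and the segment is nondegenerate. Moreover this difference is a positive multiple of $a-p$, so $(a'-p')/|a'-p'|=(a-p)/|a-p|$ for the endpoints $p',a'$ of the segment, and by the criterion for parallel segments recalled in Section \ref{sec:Definitions} the segment $[p,a]$ is parallel to it. Everything after the squeeze is bookkeeping with normalized vectors, so I expect no further difficulty.
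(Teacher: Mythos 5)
Your proof is correct and follows essentially the same route as the paper's: the same squeeze forcing $|y-a|=|y-x|+|x-a|$, the same appeal to Lemma~\ref{lem:triangle} via the identification $(y-x)/|y-x|=(x-p)/|x-p|$, and the same computation $(x-a)/|x-a|-(x-p)/|x-p|=(p-a)/|x-p|$ for parallelism. Your nondegeneracy step is in fact slightly cleaner than the paper's, which rules out collinearity of $a$ with $y$ and $p$ by a multi-case analysis rather than simply observing that the difference of the two unit vectors equals $(p-a)/|x-p|\neq 0$.
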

\begin{proof}
Since $x\in [p,y]$ and $d(y,p)\leq d(y,a)$,  
\begin{equation}\label{eq:xpxa}
|y-x|+|x-p|=|y-p|\leq |y-a|\leq |y-x|+|x-a|. 
\end{equation}
Therefore $|x-p|\leq |x-a|$.   By assumption 
\begin{equation}\label{eq:xp=xa}
|x-p|=|x-a|. 
\end{equation}
This equality and  \eqref{eq:xpxa} imply that 
\begin{equation}\label{eq:yayp}
|y-a|=|y-p|
\end{equation}
and 
\begin{equation}\label{eq:triangle_axy}
|y-a|=|y-x|+|x-a|. 
\end{equation}
It must be that $a$ is not on the line passing through $y$ and $p$. Indeed, in order to be on 
this line there are several cases. In the first case $a$ is beyond $p$ in the direction 
of $p-y$. But then $p\in (y,a)$ and hence $|y-a|>|y-p|$, 
a contradiction to \eqref{eq:yayp}. In the second case $a$ is beyond $y$ in the direction 
of $y-p$. But then in particular $y\in (x,a)$ and therefore $|y-a|<|x-a|$, a contradiction 
to \eqref{eq:triangle_axy}. In the third case $a\in (p,y)$, but then $|y-a|<|y-p|$, 
a contradiction to the assumption $d(y,p)\leq d(y,a)$ in the formulation of the lemma. 
The case $y=a$ cannot hold because otherwise \eqref{eq:triangle_axy} would imply  
that $x=y$, a contradiction. The last case $a=p$ is impossible from  the 
formulation of the lemma. Since $x\neq a$ (otherwise $x=p$) the above implies that 
$(x-a)/|x-a|\neq (y-p)/|y-p|=(x-p)/|x-p|$. It follows that the segment $[(x-a)/|x-a|,(x-p)/|x-p|]$ is non-degenerate and by  \eqref{eq:triangle_axy}, the equality $(y-x)/|y-x|=(x-p)/|x-p|$, and Lemma ~\ref{lem:triangle} this segment is 
contained in the unit sphere. This segment is parallel to $[p,a]$ since \eqref{eq:xp=xa} implies that 
\begin{equation*}
\frac{x-a}{|x-a|}-\frac{x-p}{|x-p|}=\frac{(p-a)}{|x-p|}, 
\end{equation*}
and so 
\begin{equation*}
\frac{((x-a)/|x-a|)-((x-p)/|x-p|)}{|((x-a)/|x-a|)-((x-p)/|x-p|)|}= 
\frac{(p-a)/|x-p|}{|(p-a)/|x-p||}=\frac{p-a}{|p-a|}.
\end{equation*}
 \end{proof}

 The following lemma plays a key role in the proofs of several later assertions, including 
 Theorem ~\ref{thm:TopologicalProperties}. 
\begin{lem}\label{lem:SegmentInequality}
Let $p$ and $y$ be points in a normed space and let $A$ be a nonempty subset contained in this space. 
Suppose that $p\notin A$ and that $d(x,A)$ is attained for all $x\in [p,y)$. Suppose also 
that $d(y,p)\leq d(y,A)$ and that for each $a\in A$ the segment $[p,a]$ is not parallel to any non-degenerate segment contained in the unit sphere of the space. Then $d(x,p)<d(x,A)$ for all $x\in [p,y)$.
\end{lem}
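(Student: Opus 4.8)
The plan is to argue by contradiction and channel every potential failure of the strict inequality into a single application of Lemma~\ref{lem:ParallelUnitSphere}. The scaffolding is a preliminary \emph{weak} inequality: I would first show that $d(x,p)\le d(x,A)$ for \emph{all} $x\in[p,y]$. To see this, fix such an $x$ and an arbitrary $a\in A$. Since $x$ lies between $p$ and $y$ one has the collinearity identity $|y-p|=|y-x|+|x-p|$; combining it with $|y-p|=d(y,p)\le d(y,A)\le|y-a|$ and the triangle inequality $|y-a|\le|y-x|+|x-a|$, and cancelling the common term $|y-x|$, yields $|x-p|\le|x-a|$. As $a\in A$ was arbitrary this gives $d(x,p)\le d(x,A)$ on the whole segment.

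Next I would dispose of the endpoint $x=p$ separately: because $p\notin A$ and $d(p,A)$ is attained, there is $a\in A$ with $d(p,A)=|p-a|>0=d(p,p)$, so the strict inequality already holds at $p$. For the remaining points, suppose toward a contradiction that strictness fails at some $x\in(p,y)$. By the weak inequality just established, failure can only mean the equality $d(x,p)=d(x,A)$. Here the attainment hypothesis is essential: it lets me pick an honest point $a\in A$ realizing $d(x,A)=|x-a|$, so that $|x-p|=|x-a|$, which is exactly the shape of hypothesis needed to invoke Lemma~\ref{lem:ParallelUnitSphere} with the triple $(p,y,a)$. I would then verify the remaining hypotheses of that lemma: $a\neq p$ (as $p\notin A$), $0<d(y,p)$ (since $(p,y)\neq\emptyset$ forces $p\neq y$), and $d(y,p)\le d(y,A)\le d(y,a)$. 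Lemma~\ref{lem:ParallelUnitSphere} then manufactures a non-degenerate segment contained in the unit sphere to which $[p,a]$ is parallel, directly contradicting the standing assumption that for each $a\in A$ the segment $[p,a]$ is parallel to no such segment. This contradiction forces $d(x,p)<d(x,A)$ throughout $[p,y)$.

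Since Lemma~\ref{lem:ParallelUnitSphere} (and, underneath it, the equality characterization in Lemma~\ref{lem:triangle}) already performs the geometric heavy lifting, I do not expect a serious obstacle here. The one point requiring care is conceptual rather than computational: recognizing that the weak inequality of the first step is not a mere warm-up but the crucial reduction, because it guarantees that the \emph{only} way the sought strict inequality can break is through an exact equality $d(x,p)=d(x,A)$, which is precisely the trigger of Lemma~\ref{lem:ParallelUnitSphere}. The attainment hypothesis plays the complementary role of converting the set-distance $d(x,A)$ into a concrete nearest point $a\in A$ that can be fed into that lemma; without it one would only have a minimizing sequence and the parallelism conclusion could fail to materialize.
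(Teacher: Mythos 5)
Your proposal is correct and follows essentially the same route as the paper: handle $x=p$ via attainment and $p\notin A$, use the collinearity/triangle-inequality chain to get $|x-p|\le|x-a|$, and reduce the remaining equality case to Lemma~\ref{lem:ParallelUnitSphere} applied to a nearest point $a$ supplied by the attainment hypothesis. The only cosmetic difference is that you establish the weak inequality for every $a\in A$ before specializing to the minimizer, whereas the paper works with the minimizing $a$ from the start.
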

\begin{proof}
If $y=p$, then the assertion is obvious (void).  Now assume that $y\neq p$ and let $x\in [p,y)$. 
If $x=p$, then $d(x,p)<d(x,A)$ because $p\notin A$ and $d(p,A)$ is attained. Assume that $x\in (p,y)$ and 
 let $a\in A$ satisfy $d(x,a)=d(x,A)$. Since $x\in [p,y]$ and $d(y,A)\leq d(y,a)$, it follows from  \eqref{eq:xpxa} 
 (without the need to assume that $d(x,p)=d(x,a)$) that $|x-p|\leq |x-a|$.  Assume by way of contradiction that $d(x,p)=d(x,a)$. 
 Then Lemma ~\ref{lem:ParallelUnitSphere} can be used and it implies that $[p,a]$ is parallel to a nondegenerate line 
 segment which is contained in the unit sphere. This contradicts the assumption in the formulation 
 of the lemma. Thus $d(x,p)<d(x,a)=d(x,A)$ as claimed.
 \end{proof}

The following lemma proves an estimate based on compactness. 
\begin{lem}\label{lem:LambdaNUC}
Suppose that $d(P,A)>0$, that $A$ is compact, and that the following  condition holds:
\begin{multline}\label{eq:InSegment}
\textnormal{for all}\,\, p\in P,\,\,a\in A\,\,\textnormal{and}\,\,x,y\in X,\\\,\,\textnormal{if } \,\,d(y,p)\leq d(y,a)\,\,\textnormal{and}\,\,x\in [p,y),\,\textnormal{then}\,\,d(x,p)<d(x,a). 
\end{multline}
Suppose that $\epsilon>0$ satisfies $\epsilon\leq d(P,A)/2$.  Then  the following condition holds:
\begin{multline}\label{eq:property}
\textnormal{There exists}\,\, \lambda\in (0,\epsilon),\,\, \textnormal{depending only on}\,\,\epsilon\,\,
\textnormal{and the sets}\,\,P,A,\,\,\\ \textnormal{such that for each}\,\, p\in P, y\in \dom(P,A),\,x\in [p,y],\, \\
\textnormal{if}\,\, d(x,y)=\epsilon/2,\,\, \textnormal{then}\,\,d(x,p)\leq d(x,A)-\lambda.
\end{multline}
\end{lem}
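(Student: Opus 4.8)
The plan is to deduce the uniform gap in \eqref{eq:property} from the pointwise strict inequality encoded in the hypothesis \eqref{eq:InSegment}, via a compactness argument. First I would isolate the pointwise statement I actually need: if $p\in P$ and $y$ satisfy $d(y,p)\le d(y,A)$ (the situation in which the lemma is applied, since by Theorem \ref{thm:domIntervalApp} the relevant $y$ lies on a dominance segment emanating from $p$, whence $d(z,p)\le d(z,A)$ along the whole segment), and if $x\in[p,y)$, then $d(x,p)<d(x,A)$. Indeed $d(y,p)\le d(y,A)\le d(y,a)$ for every $a\in A$, so \eqref{eq:InSegment} gives $d(x,p)<d(x,a)$ for every $a\in A$; choosing $a^{*}\in A$ with $d(x,a^{*})=d(x,A)$ (possible since $A$ is compact) yields $d(x,p)<d(x,A)$. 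This is the one place where the full strength of \eqref{eq:InSegment} — and hence, through Lemma \ref{lem:SegmentInequality}, the general-position/non-parallel input — is consumed.

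With this pointwise inequality in hand I would obtain the uniform constant by contradiction. Suppose no $\lambda\in(0,\epsilon)$ works. Then for each $n\in\N$ there is an admissible triple $p_n\in P$, $y_n$ (reached from $p_n$, so $d(y_n,p_n)\le d(y_n,A)$), and $x_n\in[p_n,y_n]$ with $d(x_n,y_n)=\epsilon/2$ but $d(x_n,A)-d(x_n,p_n)<1/n$. Since $X$ is compact and $P\subseteq X$ is compact, and $x_n\in[p_n,y_n]\subseteq X$, I may pass to subsequences so that $p_n\to p\in P$, $x_n\to x$, $y_n\to y$ in $X$; writing $x_n=p_n+t_n(y_n-p_n)$ with $t_n\in[0,1]$ and extracting $t_n\to t$, the limit satisfies $x=p+t(y-p)\in[p,y]$, and $d(x,y)=\lim d(x_n,y_n)=\epsilon/2>0$ forces $x\neq y$, so $x\in[p,y)$.

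The continuity facts I would use are that $d(\cdot,A)$ is $1$-Lipschitz (hence continuous) and that the metric is jointly continuous. These let me pass the domination relation to the limit, $d(y,p)=\lim d(y_n,p_n)\le\lim d(y_n,A)=d(y,A)$, and give $d(x,A)-d(x,p)=\lim\bigl(d(x_n,A)-d(x_n,p_n)\bigr)\le 0$, i.e.\ $d(x,p)\ge d(x,A)$. But the limiting data $p\in P$, $x\in[p,y)$, $d(y,p)\le d(y,A)$ are exactly the hypotheses of the pointwise statement, which forces $d(x,p)<d(x,A)$ — a contradiction. Hence a positive infimum exists, and I would take $\lambda$ to be half of it, capped below $\epsilon$, giving $\lambda\in(0,\epsilon)$ depending only on $\epsilon,P,A$. (Equivalently, one may argue directly that the continuous function $(p,y,x)\mapsto d(x,A)-d(x,p)$ attains a strictly positive minimum on the compact set of admissible triples.)

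I expect the main obstacle to be the uniformization step, not the pointwise inequality: one must check that the limiting triple $(p,x,y)$ still satisfies every hypothesis needed to reinvoke the pointwise statement — in particular that the domination $d(y,p)\le d(y,A)$ survives the limit and that $x$ does not collapse onto $y$ (secured here by the constraint $d(x,y)=\epsilon/2$). Compactness of $A$ is essential twice over: it converts the family of strict inequalities $d(x,p)<d(x,a)$, $a\in A$, into the single strict inequality $d(x,p)<d(x,A)$, and, together with compactness of $X$ and $P$, it supplies the convergent subsequences on which the whole contradiction rests.
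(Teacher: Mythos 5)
Your argument is correct and is essentially the paper's own proof in contrapositive form: the paper defines the set $S$ of admissible pairs (with the same restriction $d(y,p)\le d(y,A)$ that you impose on your triples), shows by extracting convergent subsequences of $p_n,t_n,y_n$ that the infimum $\wt{\lambda}=\inf\{d(x,A)-d(x,p):(x,p)\in S\}$ is attained, and then invokes \eqref{eq:InSegment} together with compactness of $A$ at the minimizer exactly as you do at your limit point. Your closing parenthetical ("the continuous function attains a strictly positive minimum on the compact set of admissible triples") is literally the paper's route, so there is nothing substantive to distinguish the two.
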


\begin{proof}
Define 
\begin{equation}\label{eq:S_lambda_til}
\begin{array}{l}
S=\{(x,p): x\in X,p\in P\,\,\textnormal{and there exists}\,\,y\in X \,\,\textnormal{such that}\\
\quad\quad\quad \,x\in [p,y],\,d(y,p)\leq d(y,A)\,\,\textnormal{and}\,\,d(x,y)=\epsilon/2\},\\
\wt{\lambda}=\inf\{d(x,A)-d(x,p): (x,p)\in S\}.
\end{array}
\end{equation}
Given  $p\in P$, let $b\in A$ be arbitrary. Let $\theta=(b-p)/|b-p|$,  $x=p+0.25\epsilon\theta$, 
and $y=p+0.75\epsilon\theta$. By the convexity of $X$ we have $x,y\in [p,b]\subseteq X$.  Given $a\in A$, 
since $d(P,A)\geq 2\epsilon$ we have  $2\epsilon\leq d(a,p)\leq d(a,y)+d(y,p)=d(a,y)+0.75\epsilon$. Because $a$ was arbitrary we have $1.25\epsilon\leq d(y,A)$. It follows that $d(y,p)\leq d(y,A)$ and $(x,p)\in S$. Thus $S\neq \emptyset$. 

For proving the assertion it suffices to show that $\wt{\lambda}$ is attained at some $(x,p)\in S$. To see why, let $y$ correspond to $x$ and $p$ in the definition of $S$. By assumption $A$ is compact. Hence $d(x,A)=d(x,a)$ for some $a\in A$. From the property of $y$ we have $d(y,p)\leq d(y,A)\leq d(y,a)$. Thus, by \eqref{eq:InSegment} it follows that $\wt{\lambda}=d(x,a)-d(x,p)>0$, and by taking \begin{equation}\label{eq:lambda_epsilon}
\lambda=\min\{\epsilon/2,\wt{\lambda}\} 
\end{equation}
it follows that \eqref{eq:property} indeed holds.

 To show that $\wt{\lambda}$ is attained, consider the sequences $(x_n)_{n=1}^{\infty}$ and $(y_n)_{n=1}^{\infty}$ such that $(x_n,p_n)\in S$  and  $\wt{\lambda}=\lim_{n\to \infty} (d(x_n,A)-d(x_n,p_n))$. For each $n\in \N$ let $y_n$ correspond to 
 $(x_n,p_n)$ in the definition of $S$. We can write $x_n=p_n+t_n\theta_n$ where $|\theta_n|=1$ and $t_n\in [0,\infty)$. In fact, $d(y_n,p_n)\geq d(y_n,x_n)=\epsilon/2$ and also $\theta_n=(y_n-p_n)/d(y_n,p_n)$. In addition, $t_n\leq\textnormal{diam}(X)$. 
 Since $P$ is compact, we can pass to a subsequence and obtain that $t=\lim_{l\to\infty} t_{n_l}$ and $p=\lim_{l\to\infty} p_{n_l}$ for some $t\in [0,\textnormal{diam}(X)]$ and $p\in P$. Since $X$ is compact, then  also $y=\lim_{i\to\infty} y_{n_{l_i}}$ for some $y\in X$ and a subsequence $(n_{l_i})_{i=1}^{\infty}$ of positive integers, and then  $\theta=\lim_{i\to\infty}\theta_{n_{l_i}}$ for $\theta=(y-p)/d(y,p)$. In addition, $x=\lim_{i\to\infty} x_{n_{l_i}}$ for $x=p+t\theta$. 
 
 
  As a result,  $x\in [p,y]$. Finally, $d(y,x)=\epsilon/2$ and $d(y,p)\leq d(y,A)$ by the continuity of the distance functions. Thus $(x,p)\in S$ and $\wt{\lambda}=d(x,A)-d(x,p)>0$.
   
\end{proof}

The next lemma establishes a simple property of subsets having a finite face decomposition. 
\begin{lem}\label{lem:[s,s']}
If $S$ has a finite face decomposition $S=\cup_{i=1}^{\ell+1}F_i$, then any non-degenerate line segment $[s,s']$ contained in  $S$ must be contained in $F_i$ for some $i\in \{1,\ldots,\ell\}$. 
\end{lem}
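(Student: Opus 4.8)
The plan is to parametrize the segment as $\gamma(t)=(1-t)s+ts'$, $t\in[0,1]$, and to reduce the statement to the following local claim: some flat part $F_{i_0}$ (with $i_0\in\{1,\dots,\ell\}$) contains a nondegenerate initial piece $[s,s'']$ of the segment, where $s''\in(s,s']$. Once such an $i_0$ and $s''$ are produced, the maximality property (item~(b) of Definition~\ref{def:FiniteFace}) applies verbatim to the segment $[s,s']\subseteq S$ and forces $[s,s']\subseteq F_{i_0}$, which is exactly the assertion. Thus the whole proof rests on manufacturing an initial subsegment lying inside a single flat part.

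To produce it, first I would show that points lying in the flat parts accumulate at the endpoint $s$. Indeed, if for some $\delta\in(0,1]$ no parameter $t\in(0,\delta)$ satisfied $\gamma(t)\in\bigcup_{i=1}^{\ell}F_i$, then, since $\gamma(t)\in S=\bigcup_{i=1}^{\ell+1}F_i$, the whole arc $\gamma((0,\delta))$ would lie in the rotund part $F_{\ell+1}$; but then any subsegment $\gamma([t_1,t_2])$ with $0<t_1<t_2<\delta$ would be a nondegenerate segment contained in $F_{\ell+1}$, contradicting item~(c) of Definition~\ref{def:FiniteFace}. Hence for every $k$ there is $t_k\in(0,1/k)$ with $\gamma(t_k)\in\bigcup_{i=1}^{\ell}F_i$. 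Because there are only finitely many flat parts, the pigeonhole principle yields a single index $i_0\in\{1,\dots,\ell\}$ and a subsequence with $\gamma(t_{k_j})\in F_{i_0}$ and $t_{k_j}\to 0$.

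Now I would exploit closedness and convexity of $F_{i_0}$. Since $F_{i_0}$ is closed and $\gamma(t_{k_j})\to\gamma(0)=s$, the endpoint $s$ itself lies in $F_{i_0}$. Fixing one $t^*:=t_{k_1}\in(0,1)$, convexity of $F_{i_0}$ gives $[s,\gamma(t^*)]=\gamma([0,t^*])\subseteq F_{i_0}$ with $\gamma(t^*)\in(s,s')$, which is the desired initial subsegment; applying the maximality property as described finishes the proof. I expect the only genuinely delicate point to be the accumulation step: one must rule out the possibility that all points of the segment near $s$ belong to the rotund part, and the no-segment property of $F_{\ell+1}$ is precisely what does this. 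The remaining ingredients (pigeonhole over finitely many parts, closedness to capture the endpoint, convexity to build the subsegment, and maximality to extend it to all of $[s,s']$) are then routine.
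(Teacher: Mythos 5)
Your proof is correct and follows essentially the same route as the paper's: both arguments manufacture a nondegenerate initial subsegment of $[s,s']$ lying in a single flat part (using the no-segment property of $F_{\ell+1}$, the finiteness and closedness of the flat parts, and their convexity) and then invoke the maximality property of Definition~\ref{def:FiniteFace}\eqref{item:MaxSegment} to capture all of $[s,s']$. The only difference is organizational: the paper works with the maximal set of indices $i$ for which $s\in F_i$ and finds a point of $(s,s']$ in one of those $F_i$, whereas you obtain the index $i_0$ by pigeonhole on a sequence of flat-part points accumulating at $s$ and then use closedness to place $s$ itself in $F_{i_0}$.
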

\begin{proof}
First note that $[s,s']\cap F_{\ell+1}=\emptyset$ since otherwise  
the fact that $F_{\ell+1}\cap (\cup_{i=1}^{\ell}F_i)=\emptyset$ implies (using the fact that $F_i$ is closed 
for each $i$) that a small part of $[s,s']$ around any $s''\in [s,s']\cap F_{\ell+1}$ is contained in $F_{\ell+1}$, a contradiction. Now consider $s$: there exists a maximal nonempty subset $I\subseteq\{1,\ldots,\ell\}$ such that $s\in F_i$ for any $i\in I$. If $I=\{1,\ldots,\ell\}$, then because $s'\in F_j$ 
for some $j\in \{1,\ldots,\ell\}$ it follows that both $s,s'\in F_j$ and then, by convexity, $[s,s']\subseteq F_j$. 
Otherwise $I$ is strictly contained in $\{1,\ldots,\ell\}$. 

It must be that some $s''\in (s,s']$ belongs to 
$F_i$ for some $i\in I$. Otherwise there exists a subsequence $(s''_n)_{n=1}^{\infty}$ converging 
to $s$ but satisfying $s_n''\notin F_i$ for any $i\in I$ and $n\in \N$. Because $\{1,\ldots,\ell\}\backslash I$ is 
nonempty and finite it follows that 
infinitely many of these $s''_n$ belong to $F_j$ for some fixed $j\not\in I$. But $F_j$ is closed by assumption, 
so the limit $s$ is in $F_j$ too, a contradiction to the maximality of $I$. Therefore there exists some $s''\in F_i\cap (s,s']$ for some $i\in I$, and by convexity it follows that $[s,s'']\subseteq F_i$. Hence $[s,s']\subseteq F_i$ by the maximality property of $F_i$ (see Property \eqref{item:MaxSegment} in Definition \ref{def:FiniteFace}). 
\end{proof}

The following lemma establishes a simple property of a set which is related to Definition \ref{def:[P,A]}. 
It is probably known and the proof is given for the sake of completeness. 
\begin{lem}\label{lem:SpanF}
Suppose that $F$ is a nonempty convex subset of a real vector space $V$. Let 
\begin{equation*}
H=\{t(v-u): \,t\in \R,\,\,u,v\in F\,\,\textnormal{are arbitrary}\}.
\end{equation*}
Then $H$ is a linear subspace of $V$. In particular, if $V$ is finite dimensional, 
then $H$ is closed with respect to any norm defined 
on $V$. 
\end{lem}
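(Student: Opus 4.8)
The plan is to show that $H$ is a linear subspace by verifying it is closed under vector addition and scalar multiplication, using the convexity of $F$ in an essential way. Closure under scalar multiplication is immediate: if $w=t(v-u)\in H$ with $u,v\in F$ and $s\in\R$, then $sw=(st)(v-u)$ is again of the required form. The nontrivial part is additivity, and this is where convexity enters. Given two elements $w_1=t_1(v_1-u_1)$ and $w_2=t_2(v_2-u_2)$ of $H$, I want to express $w_1+w_2$ as a single scalar multiple $t(v-u)$ with $u,v\in F$. The idea is to rescale: since $H$ is invariant under scaling, I may first absorb the $t_i$ by replacing $v_i-u_i$ with $t_i(v_i-u_i)$, but the obstacle is that $t_i(v_i-u_i)$ need not itself be a difference of two points of $F$ unless $|t_i|$ is controlled. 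So instead I would choose a common large scalar and use midpoints.

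More concretely, here is the approach I expect to work cleanly. The zero vector lies in $H$ (take $u=v$), so assume $w_1,w_2$ are given as above. By convexity, for any $\mu\in[0,1]$ the points $\mu v_i+(1-\mu)u_i$ lie in $F$. The key computation is to write
\begin{equation*}
w_1+w_2 = t_1(v_1-u_1)+t_2(v_2-u_2),
\end{equation*}
and then pick a scalar $c>0$ large enough that $t_i/c\in[-1,1]$ for $i=1,2$; after possibly swapping the roles of $u_i$ and $v_i$ (which replaces $t_i$ by $-t_i$) I may assume $t_i\ge 0$, so that $s_i:=t_i/c\in[0,1]$. Then $s_i v_i+(1-s_i)u_i\in F$ by convexity, and
\begin{equation*}
\bigl(s_1 v_1+(1-s_1)u_1\bigr)-\bigl(s_2 v_2+(1-s_2)u_2\bigr)
\end{equation*}
is a difference of two points of $F$; multiplying by $c$ and comparing shows this does \emph{not} directly give $w_1+w_2$ unless the $u_i$-terms are handled. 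The cleanest repair is to introduce a fixed basepoint $u_0\in F$ and note that every element of $H$ can be written as $t(v-u_0)$-type differences only after a reduction; rather than chase this, I would instead argue that $H=\mathrm{span}(F-u_0)$ for any fixed $u_0\in F$, since convexity makes $F-u_0$ a convex set containing $0$, and the linear span of any set is automatically a subspace.

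This last observation is really the heart of the matter and the step I would lead with in the final write-up: fix $u_0\in F$ and set $F_0=F-u_0$, a convex set containing the origin. I claim $H=\mathrm{span}(F_0)$. The inclusion $H\subseteq\mathrm{span}(F_0)$ holds because $v-u=(v-u_0)-(u-u_0)\in\mathrm{span}(F_0)$. For the reverse inclusion, any element of $\mathrm{span}(F_0)$ is a finite linear combination $\sum_j c_j(x_j-u_0)$ with $x_j\in F$; using that $F_0$ is convex and contains $0$, one shows such combinations collapse into a single scalar multiple of a difference of two points of $F$, which is exactly where convexity (allowing us to take convex combinations and rescale) does the work of reducing a sum to a single term. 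Since the span of any subset of a vector space is a linear subspace, $H$ is a subspace, proving the main claim. The final sentence, that $H$ is closed when $V$ is finite dimensional, is then immediate because every finite-dimensional linear subspace is closed in any norm topology (all norms on the finite-dimensional space $H$ are equivalent and $H$ is complete, hence closed in $V$). I expect the only genuine obstacle to be the bookkeeping in the reverse inclusion that reduces an arbitrary linear combination to a single difference $t(v-u)$; convexity is exactly the tool that makes this reduction possible, and I would present it via the scaling-and-midpoint argument sketched above.
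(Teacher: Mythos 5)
There is a genuine gap: the one computation that the whole lemma rests on is never actually performed. You correctly identify that scalar multiplication is trivial and that convexity must be what turns a sum of two differences into a single difference, but your attempted rescaling forms the points $s_iv_i+(1-s_i)u_i$, i.e.\ it takes convex combinations of $v_i$ with $u_i$ \emph{of the same index} --- and, as you yourself note, this ``does not directly give $w_1+w_2$.'' The correct move is to mix across the two indices: for $t_1,t_2>0$ one has the identity
\begin{equation*}
t_1(v_1-u_1)+t_2(v_2-u_2)=(t_1+t_2)\left(\left(\frac{t_1}{t_1+t_2}v_1+\frac{t_2}{t_1+t_2}v_2\right)-\left(\frac{t_1}{t_1+t_2}u_1+\frac{t_2}{t_1+t_2}u_2\right)\right),
\end{equation*}
and convexity of $F$ puts the combination of the $v$'s and the combination of the $u$'s each in $F$, so the right-hand side is visibly of the form $t(v-u)$. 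A term with $t_i<0$ is first rewritten as $(-t_i)(u_i-v_i)$ to make its coefficient positive, and the cases $t_i=0$ are trivial. This is exactly the paper's proof, and it is a two-line argument once the identity is written down.

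Your fallback --- fixing $u_0\in F$ and claiming $H=\mathrm{span}(F-u_0)$ --- does not rescue the argument, because the reverse inclusion $\mathrm{span}(F-u_0)\subseteq H$ is precisely the assertion that finite sums of elements of $H$ collapse to a single difference, i.e.\ it \emph{is} the additivity claim you set out to prove. Asserting that ``one shows such combinations collapse'' therefore restates the problem rather than solving it. The final paragraph on closedness of finite-dimensional subspaces is correct and matches the paper.
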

\begin{proof}
Since $0\in H$ and since $H$ is closed under scalar multiplication, it is sufficient to 
show that $H$ is closed under addition. Let $t_1(v_1-u_1),t_2(v_2-u_2)\in H$. The assertion 
is obvious if $t_1=0$ or $t_2=0$. Assume now that both $t_1$ and $t_2$ are positive. Because 
\begin{equation*}
t_1(v_1-u_1)+t_2(v_2-u_2)=(t_1+t_2)\left(\left(\frac{t_1}{t_1+t_2}v_1+\frac{t_2}{t_1+t_2}v_2\right)-
\left(\frac{t_1}{t_1+t_2}u_1+\frac{t_2}{t_1+t_2}u_2)\right)\right)
\end{equation*}
and because $F$ is convex, it follows that $t_1(v_1-u_1)+t_2(v_2-u_2)\in H$. The case where $t_i<0$ 
for some $i$ follows from the previous case by observing that $t_i(v_i-u_i)=(-t_i)(u_i-v_i)\in H$. 

Finally, if in addition $V$ is finite dimensional, then it is well-known 
that all its linear subspaces are closed with respect to any norm defined on it. 
\end{proof}

The next few lemmas establish additional estimates needed in the proof of the stability result or in the 
proof of certain auxiliary assertions. 
\begin{lem}\label{lem:NoParallelPA}
Suppose that $d(P,A)>0$ and any segment $[p,a]\in [P,A]$ is not parallel to a line segment 
contained in the unit sphere $S_{\wt{X}}$ of the space. Suppose also that $S_{\wt{X}}$ has  
the finite face decomposition $S_{\wt{X}}=\cup_{i=1}^{\ell+1}F_i$. Then $d(\wh{P,A},\wh{S_{\wt{X}}})>0$.
\end{lem}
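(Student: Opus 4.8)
The plan is to reduce the inequality $d(\wh{P,A},\wh{S_{\wt X}})>0$ to the elementary fact that two disjoint nonempty \emph{compact} subsets of a metric space lie at a strictly positive distance, applied not to $\wh{S_{\wt X}}$ as a whole but to each flat part separately; the finite face decomposition is precisely what makes this reduction finite and each piece compact. First I would record that $\wh{P,A}$ is compact. Since $d(P,A)>0$, the denominator in the map $(p,a)\mapsto(a-p)/|a-p|$ is bounded below by $d(P,A)$, so this map is continuous on the compact set $P\times A$ and its image $\wh{[P,A]}$ is compact; the same applies to $\wh{[A,P]}$, and hence $\wh{P,A}=\wh{[P,A]}\cup\wh{[A,P]}$ is compact by \eqref{eq:|P,A|}. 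It is nonempty because $P$ and $A$ are nonempty and disjoint.

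Next I would treat the unit sphere through its flat parts. If $\ell=0$ then $\wh{S_{\wt X}}=\emptyset$, and the convention that the distance to the empty set is infinity gives $d(\wh{P,A},\wh{S_{\wt X}})=\infty>0$. Otherwise, for each $i\in\{1,\ldots,\ell\}$ put $H_i=\{t(v-u):t\in\R,\,u,v\in F_i\}$. By Lemma \ref{lem:SpanF} the set $H_i$ is a linear subspace of $\R^m$ and is therefore closed, so $H_i\cap S_{\wt X}$ is closed and bounded, hence compact. I would then identify $\wh{F_i}=H_i\cap S_{\wt X}$: the inclusion $\subseteq$ is immediate, and conversely any $\theta\in H_i\cap S_{\wt X}$ has the form $\theta=t(v-u)$ with $|\theta|=1$, which forces $t\neq0$ and $u\neq v$, so that $\theta=\pm(v-u)/|v-u|$ is the direction of a nondegenerate segment lying in the convex set $F_i$. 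Because $F_i$ is not a point, $\wh{F_i}$ is nonempty. Finally, Lemma \ref{lem:[s,s']} guarantees that every nondegenerate segment contained in $S_{\wt X}$ already lies in one of the flat parts, so $\wh{S_{\wt X}}=\bigcup_{i=1}^{\ell}\wh{F_i}$.

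To conclude I would combine disjointness with compactness face by face. The hypothesis that no segment $[p,a]\in[P,A]$ is parallel to a segment contained in the unit sphere says, via part \eqref{item:parallel} of Lemma \ref{lem:wh[P,A]}, that no element of $\wh{P,A}$ is the direction of a nondegenerate segment contained in $S_{\wt X}$; in particular $\wh{P,A}\cap\wh{F_i}=\emptyset$ for every $i$. For each fixed $i$ the continuous function $(\theta,\phi)\mapsto|\theta-\phi|$ attains its minimum on the compact set $\wh{P,A}\times\wh{F_i}$, and this minimum is positive since the two sets are disjoint; thus $d(\wh{P,A},\wh{F_i})>0$. As there are only finitely many flat parts, $d(\wh{P,A},\wh{S_{\wt X}})=\min_{1\leq i\leq\ell}d(\wh{P,A},\wh{F_i})>0$, which is the desired conclusion.

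The crux, and the only place where the finite face hypothesis is genuinely used, will be the compactness of each direction set $\wh{F_i}$ together with there being finitely many of them. Mere disjointness of $\wh{P,A}$ and $\wh{S_{\wt X}}$ is not enough, since disjoint sets can be at distance zero, and for a general unit sphere $\wh{S_{\wt X}}$ can fail to be closed, so a single global compactness argument is unavailable. The finite face decomposition repairs exactly this: each flat part contributes the compact slice $H_i\cap S_{\wt X}$ through Lemma \ref{lem:SpanF}, and finiteness turns the infimum over $\wh{S_{\wt X}}$ into a minimum of finitely many strictly positive numbers.
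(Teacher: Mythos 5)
Your proof is correct, and it rests on the same ingredients as the paper's --- Lemma~\ref{lem:[s,s']} to confine every nondegenerate spherical segment to a flat part, Lemma~\ref{lem:SpanF} to make each $H_i\cap S_{\wt{X}}$ a closed (hence compact) set, and the finiteness of the faces --- but it packages them into a genuinely different, direct argument. The paper argues by contradiction: assuming $d(\wh{P,A},\wh{S_{\wt{X}}})=0$, it extracts convergent subsequences of the generating points $p_n\in P$, $a_n\in A$, uses the pigeonhole principle to pin the approximating spherical segments to a single face $F_i$, and then uses the closedness of $\wh{F_i,F_i}=H_i\cap S_{\wt{X}}$ to show that the limit direction $(a-p)/|a-p|$ lies in $\wh{F_i,F_i}$, contradicting the non-parallelism hypothesis. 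You instead observe that $\wh{P,A}$ is itself compact, being the continuous image of $P\times A$ under $(p,a)\mapsto(a-p)/|a-p|$ with denominator bounded below by $d(P,A)>0$ (a fact the paper never isolates), write $\wh{S_{\wt{X}}}=\bigcup_{i=1}^{\ell}\wh{F_i}$ as a finite union of compact sets, translate the hypothesis into disjointness of $\wh{P,A}$ from each $\wh{F_i}$, and invoke the elementary fact that disjoint nonempty compacta are at positive distance. What your version buys is a cleaner structure --- no triple subsequence extraction, and an explicit identification of where compactness enters on each side of the distance; what the paper's version buys is that it never needs to verify that $\wh{P,A}$ is closed, only that limits of its generating points exist. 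One pedantic remark: your citation of Lemma~\ref{lem:wh[P,A]}\eqref{item:parallel} is slightly off-label, since that statement is phrased for segments with endpoints in $X$ while the spherical segments live in $\wt{X}$; but the underlying equivalence (two nondegenerate segments are parallel iff their normalized directions agree up to sign, as recorded in Section~\ref{sec:Definitions}) is all you actually use, so nothing is lost.
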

\begin{proof}
The assertion is obvious (void) if $\wh{S_{\wt{X}}}=\emptyset$, since the distance to the empty set 
is infinity. Now assume that $\wh{S_{\wt{X}}}\neq\emptyset$ and suppose by way of contradiction that  $d(\wh{P,A},\wh{S_{\wt{X}}})=0$. 
Then there exist sequences $(\phi_n)_{n=1}^{\infty}$, $(\theta_n)_{n=1}^{\infty}$ 
such that $\lim_{n\to\infty}|\phi_n-\theta_n|=0$, $\phi_n\in \wh{P,A}$, and 
$\theta_n\in \wh{S_{\wt{X}}}$ for each $n\in \N$. By the definition of $\wh{S_{\wt{X}}}$ 
for each $n\in\N$ we have $\theta_n=(s'_n-s_n)/(s'_n-s_n)$ where $[s_n,s'_n]\subseteq S_{\wt{X}}$,  
$s_n\neq s'_n$. 

Since $\wh{P,A}=\wh{[P,A]}\cup\wh{[A,P]}$, there exists an infinite subset $N_0$ of $\N$ 
such that $\phi_n\in \wh{[P,A]}$ for each $n\in \N_0$, or  $\phi_n\in \wh{[A,P]}$ for each $n\in \N_0$. 
Assume the first case. The other case can be considered in the same way.
Then there exist sequence $(p_n)_{n\in N_0}$ and $(a_n)_{n\in N_0}$ 
such that $[p_n,a_n]\in [P,A]$ for each $n\in N_0$ and 
\begin{equation}\label{eq:a_np_ns_ns'_n}
\lim_{n\to\infty, n\in N_0}\left|\frac{a_n-p_n}{|a_n-p_n|}-\frac{s'_{n}-s_{n}}{|s'_{n}-s_{n}|}\right|=0. 
\end{equation}
By passing to subsequences and using the compactness of $P$ and $A$ we conclude that 
there exist $p\in P$ and $a\in A$ such that $p=\lim_{n\to \infty,n\in N_1}p_n$ and $a=\lim_{n\to \infty,n\in N_1}a_n$ 
for an infinite subset $N_1$ of $N_0$. By Lemma \ref{lem:[s,s']} for each $n\in N_1$ there exists $i\in \{1,\ldots,\ell\}$ such that  $[s_n,s'_n]\subseteq F_i$. Because there are finitely many subsets $F_i,\,i\in \{1,\ldots,\ell\}$,  there exist 
$i\in \{1,\ldots,\ell\}$ and an infinite subset $N_2$ of $N_1$ such $[s_n,s'_n]\subseteq F_i$ for each $n\in N_2$. 

Consider the set $H=\{t(v-u): \,t\in \R,\,\,u,v\in F_i\}$. By Lemma \ref{lem:SpanF} and since the space $\wt{X}$ 
is a real finite dimensional the set $H$ is closed. By Lemma \ref{lem:wh[P,A]}\eqref{item:HSwh[P,A]} the set $\wh{F_i,F_i}$ is nothing but the  intersection of $H$ and the unit sphere $S_{\wt{X}}$. Hence $\wh{F_i,F_i}$ is closed. Using \eqref{eq:a_np_ns_ns'_n} with $n\in N_2$ we conclude that $(a-p)/|a-p|$ is a limit point of elements from 
$\wh{F_i,F_i}$. Thus $(a-p)/|a-p|\in \wh{F_i,F_i}$. Hence $(a-p)/|a-p|=t(u-v)$ where $u,v\in F_i, u\neq v$ and 
$t\in \R$. Either $t=-1/|u-v|$ or $t=1/|u-v|$ and hence $(a-p)/|a-p|=\pm(v-u)/|v-u|$. 
Because $F_i$ is convex the line segment $[u,v]$ is contained in $F_i\subseteq S_{\wt{X}}$. 
From the preceding sentences  it follows that $[p,a]$ is parallel to $[u,v]$ which is a nondegenerate 
line segment contained in the unit sphere. This is a contradiction to the initial assumption. 
Hence $d(\wh{P,A},\wh{S_{\wt{X}}})>0$. 
\end{proof}

\begin{lem}\label{lem:NoParallelSites}
Suppose that $\min\{d(P_k,P_j): k,j\in K, k\neq j\}>0$ and that no two points from different sites form a segment which is parallel to a non-degenerate segment 
contained in the unit sphere of the normed space $\wt{X}$. Then there exists $r>0$ such that  
$d(\wh{P_k,A_k},\wh{S_{\wt{X}}})\geq r$ for all $k\in K$ where $A_k=\cup_{j\neq k}P_j$.
\end{lem}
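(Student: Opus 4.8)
The plan is to reduce the statement to the single-pair estimate already established in Lemma \ref{lem:NoParallelPA}, and then to exploit the finiteness of $K$. Concretely, I would apply Lemma \ref{lem:NoParallelPA} with $P=P_k$ and $A=A_k$ for each fixed $k\in K$ to obtain $d(\wh{P_k,A_k},\wh{S_{\wt{X}}})>0$, and then take the minimum of these finitely many positive quantities to produce the desired uniform bound $r>0$.

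First I would fix $k\in K$ and check the hypotheses of Lemma \ref{lem:NoParallelPA}. The set $A_k=\bigcup_{j\neq k}P_j$ is compact, being a finite union of the compact sites $P_j$, so the compactness requirements underlying that lemma are met. Moreover $d(P_k,A_k)=\min_{j\neq k}d(P_k,P_j)\geq \min\{d(P_k,P_j):k,j\in K,\,k\neq j\}>0$, where the first equality uses that the union defining $A_k$ is finite. It then remains to verify the parallelism hypothesis: any $[p,a]\in[P_k,A_k]$ has $p\in P_k$ and $a\in A_k$, hence $a\in P_j$ for some $j\neq k$, so $p$ and $a$ are points belonging to two different sites. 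By the standing assumption that no two points from different sites form a segment parallel to a non-degenerate segment contained in the unit sphere, the segment $[p,a]$ is not parallel to any such segment. Since $S_{\wt{X}}$ is assumed to admit a finite face decomposition, Lemma \ref{lem:NoParallelPA} applies and yields $d(\wh{P_k,A_k},\wh{S_{\wt{X}}})>0$.

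Finally, since $K$ is finite I would put $r=\min\{d(\wh{P_k,A_k},\wh{S_{\wt{X}}}):k\in K\}$, a minimum of finitely many strictly positive numbers, whence $r>0$; then $d(\wh{P_k,A_k},\wh{S_{\wt{X}}})\geq r$ for every $k\in K$, as desired. I expect no serious obstacle here, since the lemma is essentially a uniform (over $k$) repackaging of Lemma \ref{lem:NoParallelPA}. The only points requiring a little care are the identity $d(P_k,A_k)=\min_{j\neq k}d(P_k,P_j)$ (valid because $A_k$ is a finite union) and the degenerate case $\wh{S_{\wt{X}}}=\emptyset$, in which every distance equals $+\infty$ by the convention that the distance to the empty set is infinity; there one may simply choose any positive value for $r$ (or, if only some of the distances are infinite, the smallest finite one).
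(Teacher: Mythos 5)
Your proof is correct and follows essentially the same route as the paper: reduce to Lemma \ref{lem:NoParallelPA} and use the finiteness of $K$. The only (harmless) difference is that you apply that lemma directly to the pair $(P_k,A_k)$, after checking that $A_k$ is compact as a finite union of compact sites and that $d(P_k,A_k)=\min_{j\neq k}d(P_k,P_j)>0$, whereas the paper applies it to each pair $(P_k,P_j)$ separately and then observes that every unit vector of $\wh{P_k,A_k}$ already lies in some $\wh{P_k,P_j}$, so the minimum of the $r_{k,j}$ works; both verifications are equally routine.
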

\begin{proof}
By Lemma \ref{lem:NoParallelPA} we know that 
$r_{k,j}:=d(\wh{P_k,P_j},\wh{S_{\wt{X}}})>0$ for all $k,j\in K$, $j\neq k$. 
(If $\wh{S_{\wt{X}}}=\emptyset$, then any positive number, say 1, satisfies the assertion.) 
Hence $\min\{r_{k,j}:\, k,j\in K,\,\, j\neq k\}>0$. Now fix $k\in K$ and let $\phi\in\wh{P_k,A_k}$ and 
$\theta\in \wh{S_{\wt{X}}}$. Then $\phi=(p_j-p_k)/|p_j-p_k|$ for some $p_k\in P_k$ and $p_j\in P_j$, or 
$\phi=(q_k-q_j)/|q_k-q_j|$ for some $q_k\in P_k$ and $q_j\in P_j$. Thus  $|\phi-\theta|\geq r_{k,j}$. Hence $d(\wh{P_k,A_k},\wh{S_{\wt{X}}})\geq\min\{r_{k,j}: j\neq k, j\in K\}$. Therefore  every  positive $r$ satisfying $r\leq \min\{r_{k,j}: j,k\in K, j\neq k\}$  satisfies the desired conclusion. 
\end{proof}

\begin{lem}\label{lem:lambda_uni}
Let $\epsilon>0$ and $r>0$ be given. Let 
\begin{multline}\label{eq:Psi}
\Psi_{\epsilon,r}=\{(P,A): P\,\,\textnormal{and}\,\,A\,\textnormal{are nonempty  compact subsets of}\, X\,\,\textnormal{satisfying}\,\\ 
\, d(P,A)/2\geq \epsilon \,\,\textnormal{and}\,\, d(\wh{P,A},\wh{S_{\wt{X}}})\geq r\}. 
\end{multline}
 Then there exists a universal constant $\lambda\in (0,\epsilon)$ such that \eqref{eq:property} holds with this $\lambda$ for all $(P,A)\in \Psi_{\epsilon,r}$. 

\end{lem}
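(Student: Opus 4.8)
The plan is to mimic the compactness argument of Lemma \ref{lem:LambdaNUC}, but now to let the pair $(P,A)$ range over the whole family $\Psi_{\epsilon,r}$ and to extract a convergent subsequence of pairs via the Blaschke selection theorem (the space of nonempty compact subsets of the compact set $X$, equipped with the Hausdorff distance $D$, is itself compact). If $\Psi_{\epsilon,r}=\emptyset$ the statement is vacuous and any $\lambda\in(0,\epsilon)$ works, so assume $\Psi_{\epsilon,r}\neq\emptyset$. First I would observe that each $(P,A)\in\Psi_{\epsilon,r}$ satisfies the hypotheses of Lemma \ref{lem:LambdaNUC}: indeed $d(P,A)\geq 2\epsilon>0$ and $A$ is compact, while $d(\wh{P,A},\wh{S_{\wt{X}}})\geq r>0$ forces $\wh{P,A}\cap\wh{S_{\wt{X}}}=\emptyset$, so by Lemma \ref{lem:wh[P,A]}\eqref{item:parallel} no segment $[p,a]$ with $p\in P$, $a\in A$ is parallel to a nondegenerate segment contained in the unit sphere; hence Lemma \ref{lem:SegmentInequality} yields condition \eqref{eq:InSegment}. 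Recall from the proof of Lemma \ref{lem:LambdaNUC} that for each such pair the constant may be taken to be $\min\{\epsilon/2,\wt{\lambda}_{P,A}\}$, where $\wt{\lambda}_{P,A}$ is the infimum \eqref{eq:S_lambda_til} formed from the associated set $S=S_{P,A}$. Thus it suffices to bound these away from zero, i.e. to show that the global quantity
\[
\wt{\lambda}:=\inf\{d(x,A)-d(x,p):\ (P,A)\in\Psi_{\epsilon,r},\ (x,p)\in S_{P,A}\}
\]
is strictly positive; then $\lambda:=\min\{\epsilon/2,\wt{\lambda}\}$ is the desired universal constant.

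Next I would show this global infimum is attained. Taking a minimizing sequence $(x_n,p_n)\in S_{P_n,A_n}$ with witnesses $y_n$ (so $x_n\in[p_n,y_n]$, $d(y_n,p_n)\leq d(y_n,A_n)$ and $d(x_n,y_n)=\epsilon/2$), I apply the Blaschke selection theorem to pass to a subsequence along which $P_n\to P$ and $A_n\to A$ in the Hausdorff distance, and the compactness of $X$ to further arrange $x_n\to x$, $p_n\to p$, $y_n\to y$, where $P,A$ are nonempty compact subsets of $X$ and $p\in P$. I then check that the limit pair lies in $\Psi_{\epsilon,r}$: by Lemma \ref{lem:whPAP'A'}\eqref{item:d(P,A)eps1eps2} one has $d(P_n,A_n)\to d(P,A)$, whence $d(P,A)\geq 2\epsilon$; and since $d(P,A)>0$, Lemma \ref{lem:whPAP'A'}\eqref{item:wh[P,A]} gives $D(\wh{P_n,A_n},\wh{P,A})\to 0$, so that $d(\wh{P_n,A_n},\wh{S_{\wt{X}}})\to d(\wh{P,A},\wh{S_{\wt{X}}})$ (distance to a fixed set is $D$-Lipschitz in its first argument), and the bound $\geq r$ survives the limit (the case $\wh{S_{\wt{X}}}=\emptyset$ being trivial). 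A routine passage to the limit in the defining relations, using the joint continuity of $(z,B)\mapsto d(z,B)$ under Hausdorff convergence of $B$, shows $x\in[p,y]$, $d(y,p)\leq d(y,A)$ and $d(x,y)=\epsilon/2$, i.e. $(x,p)\in S_{P,A}$, and that $d(x,A)-d(x,p)=\wt{\lambda}$.

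Finally I would show $\wt{\lambda}>0$. Since $A$ is compact, choose $a\in A$ with $d(x,A)=d(x,a)$; then $d(y,p)\leq d(y,A)\leq d(y,a)$, and $d(x,y)=\epsilon/2>0$ gives $x\in[p,y)$. Because $(P,A)\in\Psi_{\epsilon,r}$ we have $p\notin A$ and the segment $[p,a]$ is not parallel to any nondegenerate segment in the unit sphere, so Lemma \ref{lem:SegmentInequality} applies and yields $d(x,p)<d(x,a)=d(x,A)$. Hence $\wt{\lambda}=d(x,A)-d(x,p)>0$, and $\lambda=\min\{\epsilon/2,\wt{\lambda}\}\in(0,\epsilon)$ is a universal constant for which \eqref{eq:property} holds on all of $\Psi_{\epsilon,r}$.

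I expect the main obstacle to be the set-level limit rather than the point-level one: one must guarantee that the limiting pair $(P,A)$ still belongs to $\Psi_{\epsilon,r}$, and in particular that the crucial ``non-parallelism'' separation $d(\wh{P,A},\wh{S_{\wt{X}}})\geq r$ is preserved. This is exactly where the Hausdorff continuity of $(P,A)\mapsto\wh{P,A}$ from Lemma \ref{lem:whPAP'A'}\eqref{item:wh[P,A]} does the work, and it is available precisely because the separation $d(P,A)$ stays bounded below by $2\epsilon$; without such a uniform lower bound this continuity, and hence the entire argument, could break down.
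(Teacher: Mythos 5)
Your proof is correct, and like the paper's it is a compactness argument, but the mechanism differs in one essential respect. The paper sets $\lambda=\inf\{\lambda_{P,A}:(P,A)\in\Psi_{\epsilon,r}\}$ and argues by contradiction: from a sequence of pairs with $\lambda_n\to0$ it extracts convergent subsequences of the \emph{points} $p_n,x_n,y_n,a_n$ only --- never of the sets --- and carries the quantitative inequality $\left|\frac{a_n-p_n}{|a_n-p_n|}-\frac{s_2-s_1}{|s_2-s_1|}\right|\geq r$ (inequality \eqref{eq:aps1s2}) to the limit, where it collides with the conclusion of Lemma \ref{lem:ParallelUnitSphere}. You instead invoke the Blaschke selection theorem to obtain a Hausdorff limit pair $(P,A)$, verify via Lemma \ref{lem:whPAP'A'} that this pair still lies in $\Psi_{\epsilon,r}$ (correctly identifying the uniform lower bound $d(P_n,A_n)\geq 2\epsilon$ as what makes the map $(P,A)\mapsto\wh{P,A}$ continuous there), and then reapply the qualitative Lemma \ref{lem:SegmentInequality} to the limit configuration. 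Both routes work; yours is structurally cleaner, exhibiting $\Psi_{\epsilon,r}$ as a family closed under Hausdorff limits on which the relevant infimum is attained, at the price of importing Blaschke selection and the joint continuity of $(z,B)\mapsto d(z,B)$, neither of which is stated in the paper; the paper's version avoids any set-level limit by replacing the membership $(P,A)\in\Psi_{\epsilon,r}$ with the single surviving inequality \eqref{eq:aps1s2}. Your formulation as a direct attainment argument versus the paper's proof by contradiction is an equivalent repackaging. One cosmetic point: the non-parallelism you deduce from $d(\wh{P,A},\wh{S_{\wt{X}}})\geq r$ follows directly from the definitions of $\wh{P,A}$ and $\wh{S_{\wt{X}}}$ rather than from Lemma \ref{lem:wh[P,A]}\eqref{item:parallel}, which concerns parallelism to segments of $[P,A]$; the inference itself is sound.
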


\begin{proof}
Note that the $\lambda$ described in Lemma \ref{lem:LambdaNUC} depends on $P$ and $A$, while here it is independent of them. To avoid ambiguity, we denote the $\lambda$ from Lemma \ref{lem:LambdaNUC} by $\lambda_{P,A}$. The assertion is 
obvious (void) if $\Psi_{\epsilon,r}=\emptyset$, so assume this set is nonempty. 
Given any $(P,A)\in \Psi_{\epsilon,r}$ it follows by 
Lemma \ref{lem:SegmentInequality} that  \eqref{eq:InSegment} holds since $d(\wh{P,A},\wh{S_{\wt{X}}})\geq r$. 
Hence Lemma \ref{lem:LambdaNUC} does imply the existence of $\lambda_{P,A}>0$ such that \eqref{eq:property} holds. 
Recall also that $\lambda_{P,A}<\epsilon$. Define 
\begin{equation}\label{eq:lambda_universal}
\lambda=\inf\{\lambda_{P,A}: (P,A)\in \Psi_{\epsilon,r}\}.
\end{equation}
For finishing the proof it suffices to prove that $\lambda>0$.
Suppose by way of contradiction that $\lambda=0$.  Then for any $n\in \N$ there exist compact subsets $P^{(n)}$  and $A^{(n)}$ of $X$ satisfying $d(P^{(n)},A^{(n)})/2\geq\epsilon$ and $d(\wh{P^{(n)},A^{(n)}},\wh{S_{\wt{X}}})\geq r$ 
 such that the number $\lambda_n:=\lambda_{P^{(n)},A^{(n)}}>0$ corresponding to them from Lemma \ref{lem:LambdaNUC} satisfies $\lambda_n<1/n$.  Note that by \eqref{eq:S_lambda_til} and \eqref{eq:lambda_epsilon} we have $\lambda_n=\wt{\lambda}_n<\epsilon/2$ for large enough $n$. This, \eqref{eq:S_lambda_til},  and the compactness of $A^{(n)}$  imply that there exist $p_n\in P^{(n)}$, $y_n\in X$, $x_n\in [p_n,y_n]$, and $a_n\in A^{(n)}$ such that $d(x_n,y_n)=\epsilon/2$, $d(y_n,p_n)\leq d(y_n,A^{(n)})=d(y_n,a_n)$ and 
\begin{equation}
\lambda_n\leq d(x_n,a_n)-d(x_n,p_n)<\lambda_n+1/n<2/n.
\end{equation}
Because $d(\wh{P^{(n)},A^{(n)}},\wh{S_{\wt{X}}})\geq r$ we also have 
\begin{equation}\label{eq:aps1s2}
|((a_n-p_n)/|a_n-p_n|)-((s_2-s_1)/|s_2-s_1|)|\geq r
\end{equation}
for every nondegenerate segment $[s_1,s_2]\subseteq S_{\wt{X}}$. By passing to convergent subsequences we obtain the existence of points $p,x,y$, and $a$ contained in $X$ and satisfying the relations $x\in [p,y]$, $d(x,y)=\epsilon/2$, $d(y,p)\leq d(y,a)$, $d(p,a)/2\geq\epsilon$ and $d(x,a)-d(x,p)=0$. If $x=p$, then $p=x=a\in A$, contradicting the fact that $d(P,A)>0$. 
Hence $d(x,a)=d(x,p)>0$. From Lemma \ref{lem:ParallelUnitSphere}  it follows that 
$[p,a]$ is parallel to a nondegenerate segment $[s_1,s_2]$ that is contained in the unit sphere $S_{\wt{X}}$. 
But this is impossible since   $|((a-p)/|a-p|)-((s_2-s_1)/|s_2-s_1|)|\geq r$ because of \eqref{eq:aps1s2},  a contradiction. 
\end{proof}

We are now ready to prove Theorem \ref{thm:stabilityNUC}. 

\begin{proof}[{\bf Proof of Theorem \ref{thm:stabilityNUC}:}] 
It is sufficient to assume that $\epsilon\in(0,\eta/6)$, otherwise one can take 
for the given $\epsilon$ the $\Delta$ associated with, say, $\eta/10$. 
We will show that all the conditions needed in 
Proposition ~\ref{prop:stabilityVor} are satisfied. 

Since $X$ is bounded we obtain  \eqref{eq:Mk} with, say, $M=M'=\textnormal{diam}{X}$. 
Let $k\in K$ be given. Since no two points of $P_k$ and $A_k=\cup_{j\neq k}P_j$ 
form a segment which is parallel to a non-degenerate segment contained in the unit sphere, 
Lemma \ref{lem:NoParallelSites} implies the existence of $r>0$ such that $d(\wh{P_k,A_k},\wh{S_{\R^m}})\geq 2r$ for each $k\in K$. By \eqref{eq:eta} it follows that $d(P_k,A_k)\geq \eta>2\epsilon$ for each $k\in K$. Therefore  $(P_k,A_k)\in \Psi_{\epsilon,r}$ for each $k\in K$ (see \eqref{eq:Psi}; note that we take here $\Psi_{\epsilon,r}$ and not $\Psi_{\epsilon,2r}$).  

Let $\lambda$ be taken from Lemma \ref{lem:lambda_uni} and let $\lambda'=\lambda$. Let $\Delta$ 
be any positive number satisfying 
\begin{equation*}
\Delta<\min\left\{\frac{\lambda}{8(1+(M/\epsilon))}, \frac{r\eta}{8}\right\}.
\end{equation*}
In particular $\Delta<\lambda<\epsilon$. Let $(P'_k)_{k\in K}$ be any tuple of nonempty compact subsets of $X$ satisfying $D(P_k,P'_k)<\Delta$ for any $k\in K$. 
For each $k\in K$ let $A'_k=\cup_{j\neq k}P_j$. Then $D(A_k,A'_k)\leq \min\{ D(P_j,P'_j): j\in K, j\neq k\}< \Delta<\epsilon$  by Lemma \ref{lem:HausdorffUnion}. This and Lemma \ref{lem:whPAP'A'}\eqref{item:d(P,A)eps1eps2} imply that 
\begin{equation}\label{eq:P'A'eps}
d(P'_k,A'_k)\geq d(P_k,A_k)-2\epsilon\geq \eta-2\epsilon> 2\epsilon. 
\end{equation}
In addition, Lemma \ref{lem:whPAP'A'}\eqref{item:wh[P,A]} implies that $D(\wh{P_k,A_k},\wh{P'_k,A'_k})<4\Delta/\eta <r$. 

Fix $k\in K$ and let $s\in \wh{S_{\R^m}}$ and $q'\in \wh{P'_k,A'_k}$  be given. 
From the inequality just proved, namely $D(\wh{P_k,A_k},\wh{P'_k,A'_k})<r$, there exists  $q\in \wh{P_k,A_k}$ such that $d(q,q')< r$. In addition, $d(q,s)\geq 2r$ because $d(\wh{P_k,A_k},\wh{S_{\R^m}})\geq 2r$. Hence 
\begin{equation*}
2r\leq d(q,s)\leq d(q,q')+d(q',s)<r+d(q',s). 
\end{equation*}
As a result, $d(\wh{P'_k,A'_k},\wh{S_{\R^m}})\geq r$ since $q'$ and $s$ were arbitrary. 
Consequently (see \eqref{eq:Psi}), this and \eqref{eq:P'A'eps} imply that $(P'_k,A'_k)\in \Psi_{\epsilon,r}$ for each $k\in K$. Hence Lemma \ref{lem:lambda_uni} implies that \eqref{eq:property} holds with $P=P_k$ and $A=A_k$, and also with $P=P'_k$ and $A=A'_k$ for each $k\in K$. Therefore  \eqref{eq:property_k} and \eqref{eq:property'_k} hold with $\lambda'=\lambda$.  Thus, by denoting $\epsilon_4=\Delta$ we see that all the assumptions in Proposition ~\ref{prop:stabilityVor} are satisfied, and hence $D(R_k,R'_k))<\epsilon$ holds for each $k\in K$.
\end{proof}

The proof of Theorem \ref{thm:TopologicalProperties} is based on Lemma  ~\ref{lem:SegmentInequality}.  

\begin{proof}[{\bf Proof of Theorem \ref{thm:TopologicalProperties}}]
First \eqref{eq:closure} will be proved. 
Let $f:X\to \R$ be defined by  $f(x)=d(x,P)-d(x,A)$ for all $x\in X$. This  
is a continuous function (with respect to the metric topology). Since $\dom(P,A)=\{x\in X: f(x)\leq 0\}$ 
it follows that $\dom(P,A)$ is closed. This and the inclusion $\{x\in X: d(x,P)<d(x,A)\}\subseteq \dom(P,A)$ 
imply the inclusion $\overline{\{x\in X: d(x,P)<d(x,A)\}}\subseteq \dom(P,A)$.  
 For the reverse inclusion, let $z\in \dom(P,A)$. If $d(z,P)<d(z,A)$, then obviously the point $z$ is in the set $\overline{\{x\in X: d(x,P)<d(x,A)\}}$. Now suppose that $d(z,P)=d(z,A)$.
By assumption there is $p\in P$ such that $d(z,P)=d(z,p)$, and since $p\in P$, it  follows that  $p\notin A$. This 
and the fact that $d(p,A)$ is attained imply that $d(p,A)>0$. Hence if $z=p$, then $z\in \overline{\{x\in X: d(x,P)<d(x,A)\}}$.  If $z\neq p$, then $[p,z)\neq \emptyset$, and Lemma  ~\ref{lem:SegmentInequality} 
implies that every $x\in [p,z)$ (arbitrary close to $z$) satisfies the inequality $d(x,P)\leq d(x,p)<d(x,A)$. 
Thus again $z\in \overline{\{x\in X: d(x,P)<d(x,A)\}}$.

Now \eqref{eq:boundary} will be proved. Because of the continuity of the function $f$ defined above 
one obtains that  all the points in $\{x\in X: d(x,P)<d(x,A)\}$ and $\{x\in X: d(x,A)<d(x,P)\}$ 
are interior points. It follows that $\partial(\dom(P,A))\subseteq \{x\in X: d(x,P)=d(x,A)\}$ 
without any assumption on the sites or on $X$. For the reverse inclusion, let 
 $z$ satisfy $d(z,P)=d(z,A)$. Then $z\in \dom(P,A)$. Let $a\in A$ satisfy $d(z,a)=d(z,A)$. 
 Then $a\neq z$ since otherwise the equality $d(z,P)=d(z,A)$ and the fact that $d(z,P)$ is 
 attained would imply that $z\in P\cap A$, a contradiction. Hence $[a,z)\neq\emptyset$. The 
 inclusion  $[a,z)\subseteq \{x\in X: d(x,A)<d(x,P)\}$ holds  
because of Lemma  ~\ref{lem:SegmentInequality} (with $P$ instead of $A$ 
and $a$ instead of $p$) and it proves that arbitrary close to $z$ 
there are points outside $\dom(P,A)$. Thus 	\eqref{eq:boundary} holds. 

Finally, \eqref{eq:boundary}, the equality 
\begin{multline*}
\partial(\dom(P,A))\bigcup \Int(\dom(P,A))=\dom(P,A)\\= \{x\in X: d(x,P)=d(x,A)\}\bigcup\{x\in X: d(x,P)<d(x,A)\},    
\end{multline*}
 and the fact that the terms in both unions are disjoint, all imply \eqref{eq:interior}. 
\end{proof}

The proof is of Corollary ~\ref{cor:StabilityBisectors} is based on Theorem ~\ref{thm:stabilityNUC} 
and Theorem  ~\ref{thm:TopologicalProperties}. 
It is also based on the 
fact that the union of the Voronoi cells is the whole world $X$, i.e., 
no neutral Voronoi region exists. The non-existence of a neutral region follows 
immediately from the fact that finitely many sites are considered (but there is a wide 
class of cases where it does not exist even when infinitely many sites 
are considered).

\begin{proof}[{\bf Proof of Corollary ~\ref{cor:StabilityBisectors}}]

Let $k\in K$ be given. 
Suppose for a contradiction that $D(B_k,B'_k)>\epsilon$. This means that 
either there exists $x\in B_k$ satisfying $d(x,B'_k)>\epsilon$ or 
 there exists $x'\in B'_k$ satisfying $d(x',B_k)>\epsilon$. 
Assume the first case. The second one can be proved in the same way. 
Let $\tau>0$ be small enough such that  $d(x,B'_k)>\tau+\epsilon$. 

Consider the function $f:X\to\R$ defined by $f(z)=d(z,P'_k)-d(z,A'_k)$. 
Since $x\notin B'_k$ either $f(x)>0$ or $f(x)<0$. 
It must be that the sign of $f(z)$  is the same as of $f(x)$ for all 
$z\in X$ in the open ball $B(x,\tau+\epsilon)$. Indeed, if this not true, 
then for some $z\in X$ in the ball either $f(z)=0$ or 
both $f(z)$ and $f(x)$ have opposite signs. But in both cases the intermediate value theorem implies that the 
continuous function $g:[0,1]\to \R$ defined by $g(t)=f(tx+(1-t)z)$ vanishes at some 
$t\in [0,1]$, that is, $f(y_t)=0$ for $y_t=tx+(1-t)z\in X$. It follows that $y_t\in B'_k\cap B(x,\tau+\epsilon)$. 
In particular $d(x,B'_k)\leq d(x,y_t)<\tau+\epsilon$, a contradiction to the assumption 
$d(x,B'_k)>\tau+\epsilon$.

Assume now that $f(x)>0$. By the definition of $f$ this implies that $d(x,A'_k)<d(x,P'_k)$. 
Since $A'_k=\cup_{j\neq k}P'_j$ it follows that there exists $j\neq k$ such that $d(x,P'_j)<d(x,P'_k)$. 
Since $x\in R_k$ and since $D(R_k,R'_k)<\epsilon$ 
(by Theorem ~\ref{thm:stabilityNUC}) there exists $x'\in R'_k$ such that $d(x,x')<\epsilon$. 
By definition $d(x',P'_k)\leq d(x',A'_k)$, i.e., $f(x')\leq 0$. 
But $x'\in B(x,\tau+\epsilon)$ and the sign of $f(x')$ is not the same as $f(x)$, 
a contradiction to the previous paragraph. 

It remains to consider the case  $f(x)<0$. Since by assumption $x\in B_k$ and 
since $B_k=\partial(R_k)$ by Theorem  ~\ref{thm:TopologicalProperties}, 
there exists $y\in X\backslash R_k$ such that $d(y,x)<\tau$. Since $y\notin R_k$ 
and since $\bigcup_{j\in K}R_j=X$ as explained before the formulation of the theorem, 
there exists $j\neq k$ such that $y\in R_j$. Since $D(R_j,R'_j)<\epsilon$ 
(by Theorem ~\ref{thm:stabilityNUC}) there exists $y'\in R'_j$ such that $d(y,y')<\epsilon$. 
By definition $d(y',P'_j)\leq d(y',A'_j)$. In particular, 
\begin{equation}\label{eq:y'jk}
d(y',P'_j)\leq d(y',P'_k). 
\end{equation}
But $d(y',x)\leq d(y',y)+d(y,x)<\epsilon+\tau$. Hence $y'\in B(x,\tau+\epsilon)$ 
and by what proved earlier the sign of $f(y')$ must be the same as of $f(x)$. Thus $f(y')<0$, 
that is, $d(y',P'_k)<d(y',A'_k)\leq d(y',P'_j)$, a contradiction to \eqref{eq:y'jk}. 

The above mentioned contradictions show that the assumption 
$D(B_k,B'_k)>\epsilon$ cannot hold. Does $D(B_k,B'_k)\leq\epsilon$ for each $k\in K$, as claimed. 
\end{proof} 

\section*{Acknowledgments}
I thank Eytan Paldi for a suggestion which helped to shorten the proof of Lemma ~\ref{lem:SpanF}. \\

{\noindent \bf Additional acknowledgments: To be added.}

\bibliographystyle{amsplain}
\bibliography{biblio}

\end{document}